\newtheorem{theorem}{Theorem}
\newtheorem{lemma}[theorem]{Lemma}
\newcommand{\Statew}[2]{\State\parbox[t]{\dimexpr\linewidth-#1em}{#2}\strut}
\def\vec#1{\mathchoice{\mbox{\boldmath$\displaystyle#1$}}
{\mbox{\boldmath$\textstyle#1$}}
{\mbox{\boldmath$\scriptstyle#1$}}
{\mbox{\boldmath$\scriptscriptstyle#1$}}}
\newcommand{\floor}[1]{\left\lfloor #1\right\rfloor}
\newcommand{\ceil}[1]{\left\lceil #1\right\rceil}
\newcommand{\N}{\mathbb{N}}
\newcommand{\Z}{\mathbb{Z}}
\newcommand{\RR}{\mathbb{R}}
\newcommand{\F}{\mathbb{F}}
\newcommand{\I}{I}
\newcommand{\Ical}{\mathcal{I}}
\newcommand{\C}{C}
\newcommand{\R}{R}
\newcommand{\bigO}{\mathcal{O}}
\newcommand{\Mult}{\mathrm{Mult}}
\newcommand{\evH}{E}
\DeclareMathOperator{\eval}{ev}
\DeclareMathOperator{\mult}{\mathsf{M}}
\newcommand*{\bdiv}{%
	\nonscript\mskip-\medmuskip\mkern5mu%
	\mathbin{\operator@font div}\penalty900\mkern5mu%
	\nonscript\mskip-\medmuskip
}
\let\originalleft\left
\let\originalright\right
\renewcommand{\left}{\mathopen{}\mathclose\bgroup\originalleft}
\renewcommand{\right}{\aftergroup\egroup\originalright}
\begin{document}

% Title, authors, abstract, etc.
\title{Fast systematic encoding of multiplicity codes}
\author{Nicholas Coxon}
\address{\normalfont INRIA and Laboratoire d'Informatique de l'\'{E}cole polytechnique, Palaiseau, France.}
\email{nicholas.coxon@inria.fr}
\date{\today}
%\keywords{Reed--Muller codes, multiplicity codes, systematic encoding, Hermite interpolation and evaluation}

\begin{abstract}
	We present quasi-linear time systematic encoding algorithms for multiplicity codes. The algorithms have their origins in the fast multivariate interpolation and evaluation algorithms of van der Hoeven and Schost~(2013), which we generalise to address certain Hermite-type interpolation and evaluation problems. By providing fast encoding algorithms for multiplicity codes, we remove an obstruction on the road to the practical application of the private information retrieval protocol of Augot, Levy-dit-Vehel and Shikfa~(2014). 
\end{abstract}

\maketitle

\section{Introduction}

Multiplicity codes~\cite{kopparty2011,kopparty2014} generalise the classical family of Reed--Muller codes by augmenting their construction to include the evaluations of derivatives up to a given order. They inherit the property of being locally correctable from Reed--Muller codes, allowing any specified coordinate of a codeword in a multiplicity code to be recovered with high probability after examining only a sublinear, in the dimension of the code, number of entries in a possibly corrupted version of the codeword. Restricting to Reed--Muller codes while retaining sublinear local correction also restricts the maximum attainable information rate of the codes to roughly a half. Moving to multiplicity codes allows sublinear local correction and rates approaching one~\cite{kopparty2011}.

A closely related notion to local correction is that of local decoding~\cite{katz2000}. Whereas local correctability is a property of the codewords of a code, local decodability is a property of an encoding function of a code. For local decoding, one is required to recover a specified coordinate of a message after examining only a small number of coordinates in a possibly corrupted version of its encoding. It follows that a locally correctable code that is equipped with a systematic encoding function, i.e., one that embeds messages into their encodings, is also locally decodable. Augot, Levy-dit-Vehel and Ng\^{o}~\cite{augot2015} provide a systematic encoding function for multiplicity codes by combining results of Kopparty~\cite{kopparty2012} and Key, McDonough and Mavron~\cite{key2006}. By using their encoding function, multiplicity codes offer sublinear local decoding, while still allowing high rates.

The local decoding algorithm of multiplicity codes is randomised, with the queries to a codeword appearing uniformly distributed over its entries when viewed individually. As a result, an information-theoretically secure private information retrieval protocol may be built upon multiplicity codes by using the construction of Katz and Trevisan~\cite[Section~4]{katz2000}. Private information retrieval~\cite{chor1998} allows a user to retrieve entries from an online database without revealing which entries are being retrieved to the database servers. Using multiplicity codes in the construction of Katz and Trevisan yields a protocol with low communication complexity, when compared to the trivial solution of downloading the entire database, since the amount of data transferred to recover a single database entry is roughly equal to amount of codeword data examined during one round of local decoding.

Augot, Levy-dit-Vehel and Shikfa~\cite{augot2014} exploit geometric properties of multiplicity codes to improve upon the protocol obtain by the Katz--Trevisan construction, with their protocol incurring a smaller storage overhead and requiring fewer database servers. The protocol begins by systematically encoding the database as a codeword in a multiplicity code. The codeword is then distributed amongst the database servers. It follows that the encoding increases the amount of stored data by a factor equal to the inverse of the information rate of the code. Thus, the protocol favours the use of multiplicity codes over Reed--Muller codes.

For the protocol of Augot, Levy-dit-Vehel and Shikfa to be realisable for large databases, it is necessary that the initial encoding may be performed efficiently. In this paper, we show that it is possible to perform the encoding in time that is quasi-linear in the number of field elements that appear in the codeword.

\subsection{Multiplicity codes}

Let $\F_q$ denote the finite field with $q$ elements. We enumerate the field as $\F_q=\{\alpha_0,\dotsc,\alpha_{q-1}\}$ and let $[q]=\{0,1,\dotsc,q-1\}$ denote its index set. Then the elements of $\F^n_q$ are identified with vectors in $[q]^n$ by defining $\vec{\alpha}_{\vec{j}}=(\alpha_{j_1},\dotsc,\alpha_{j_n})$ for $\vec{j}=(j_1,\dotsc,j_n)\in[q]^n$. The ring of polynomials over $\F_q$ in indeterminates $X_1,\dotsc,X_n$ is denoted by $\F_q[\vec{X}]=\F_q[X_1,\dotsc,X_n]$, and we define $\vec{X}^{\vec{i}}=X^{i_1}_1\dotsm X^{i_n}_n$ for $\vec{i}=(i_1,\dotsc,i_n)\in\N^n$.

A codeword of a multiplicity code is constructed by taking a polynomial in $\F_q[\vec{X}]$ and evaluating its Hasse derivatives up to a given order at all points in $\F^n_q$. The Hasse derivatives of a polynomial $F\in\F_q[\vec{X}]$ are given by the coefficients (in $\F_q[\vec{X}]$) of the shifted polynomial $F(\vec{X}+\vec{Z})\in\F_q[\vec{X}][\vec{Z}]=\F_q[\vec{X}][Z_1,\dotsc,Z_n]$ for algebraically independent indeterminates $Z_1,\dotsc,Z_n$ over $\F_q[\vec{X}]$. For $\vec{s}=(s_1,\dotsc,s_n)\in\N^n$, the coefficient of $\vec{Z}^{\vec{s}}=Z^{s_1}_1\dotsm Z^{s_n}_n$ in the shifted polynomial is called the $\vec{s}$th Hasse derivative of $F$, which we denote by $H(F,\vec{s})$. Accordingly, we have
\begin{equation*}
	F(\vec{X}+\vec{Z})
	=\sum_{\vec{s}\in\N^n}
	H(F,\vec{s})(\vec{X})\vec{Z}^{\vec{s}}.
\end{equation*}
We define the weight of a vector $\vec{i}\in\N^n$, denoted $\left|\vec{i}\right|$, to be the sum of its entries. Then the $\vec{s}$th Hasse derivative is said to have order $\left|\vec{s}\right|$.

The polynomials that have their derivatives evaluated in a multiplicity code are restricted by their (total) degree. Consequently, we let $\F_q[\vec{X}]_d$ denote the vector space of polynomials in $\F_q[\vec{X}]$ that have degree at most~$d$. We index the derivatives of order less than $s$ by the set $S_{s,n}=\{\vec{s}\in\N^n\mid\left|\vec{s}\right|<s\}$, and let $\sigma_{s,n}$ denote its cardinality. Then for $d,s\in\N$ such that $d<sq$, the multiplicity code $\Mult^s_d$ is defined to be the image of the map
\begin{align*}
	\eval^s_d\ :\ \F_q[\Vec{X}]_d
	&\ \rightarrow\ 
	\left(\F^{\sigma_{s,n}}_q\right)^{q^n}\\
	F&\ \mapsto\ 
	\left(
		\left(
			H\left(F,\vec{t}\right)(\vec{\alpha}_{\vec{j}})
		\right)_{\vec{t}\in S_{s,n}}
	\right)_{\vec{j}\in[q]^n}.
\end{align*}
Thus, the multiplicity code $\Mult^s_d$ is a vector space over $\F_q$ of dimension $\binom{n+d}{n}$, while its minimum distance is at least $(1-d/(sq))q^n$~\cite[Lemma~8]{dvir2013} and its information rate is $\binom{n+d}{n}/\left(\binom{n+s-1}{n}q^n\right)$.

\subsection{Systematic encoding of multiplicity codes}\label{sec:encoding}

Given a multiplicity code $\Mult^s_d$, it is natural to consider encoding functions that are $\F_q$-linear functions from~$\F^k_q$ onto the code, where $k=\binom{n+d}{n}$ is the code's dimension. The elements of~$\F^k_q$ are then called the message vectors, or simply messages, of the code. Such an encoding function $\mathrm{enc}:\F^k_q\rightarrow\Mult^s_d$ is systematic if the $i$th entry of each message vector $m$, for $i\in\{1,\dotsc,k\}$, always appears in the encoding $\mathrm{enc}(m)$ at some fixed location. Recording these locations yields a set $\Ical\subseteq[q]^n\times S_{s,n}$ such that the map
\begin{align*}
	\eval_\Ical\ :\ \F_q[\Vec{X}]_d
	&\ \rightarrow\ 
	\F^{\binom{n+d}{n}}_q\\
	F&\ \mapsto\ 
	\left(
		H\left(F,\vec{t}\right)(\vec{\alpha}_{\vec{j}})
	\right)_{(\vec{j},\vec{t})\in\Ical}
\end{align*}
is a bijection. Conversely, a set $\Ical\subseteq[q]^n\times S_{s,n}$ such that the map $\eval_\Ical$ is a bijection induces a systematic encoding function
\begin{equation*}
	\eval^s_d\circ\eval^{-1}_\Ical:\F^{\binom{n+d}{n}}_q\rightarrow\Mult^s_d.
\end{equation*}
Indeed, the function is systematic since the entries of a message vector each reappear in its encoding as the value of some fixed derivative. Such a set $\Ical\subseteq[q]^n\times S_{s,n}$ is called an interpolating set~\cite[Appendix~A]{kopparty2012} or an information set~\cite[Definition~4]{augot2015} of the multiplicity code $\Mult^s_d$.

Kopparty~\cite[Appendix~A]{kopparty2012} provides a method of constructing information sets, and thus a construction of systematic encoding functions, for multiplicity codes. However, Kopparty does not provide explicit examples of the construction. Augot, Levy-dit-Vehel and Ng\^{o}~\cite{augot2015} subsequently provide an explicit family of information sets by supplementing Kopparty's construction with a result of Key, McDonough and Mavron~\cite[Theorem~1]{key2006}.

\begin{theorem}[\cite{key2006,kopparty2012,augot2015}]\label{thm:info_set} For $d,s\in\N$ such that $d<sq$,
\begin{equation*}
	\Ical_{d,n}
	=\left\{
		(\vec{i},\vec{s})\in[q]^n\times\N^n
		\mid
		\left|\vec{i}+\vec{s}q\right|\leq d
	\right\}
\end{equation*}
is an information set of $\Mult^s_d$.
\end{theorem}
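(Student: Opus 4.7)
The plan is to exhibit a basis of $\F_q[\vec{X}]_d$ for which the matrix representing $\eval_{\Ical_{d,n}}$ is triangular with nonzero diagonal; combined with a dimension count, this will imply bijectivity. As a preliminary step, I would observe that coordinatewise Euclidean division by $q$ gives a bijection $(\vec{i},\vec{s}) \mapsto \vec{i} + \vec{s}q$ from $[q]^n \times \N^n$ onto $\N^n$ that carries $\Ical_{d,n}$ onto $\{\vec{k}\in\N^n \mid \left|\vec{k}\right|\leq d\}$; this yields $\left|\Ical_{d,n}\right|=\binom{n+d}{n}=\dim\F_q[\vec{X}]_d$ and supplies a natural indexing of the rows of the evaluation matrix. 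The containment $\Ical_{d,n}\subseteq [q]^n\times S_{s,n}$ is immediate from $\left|\vec{s}\right|q\leq \left|\vec{i}+\vec{s}q\right|\leq d<sq$.

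For the univariate core, I would introduce, for each $k\in\N$ with digit decomposition $k=i(k)+s(k)q$ (with $i(k)\in[q]$ and $s(k)\in\N$), the polynomial
\begin{equation*}
	P_k(X) = (X^q - X)^{s(k)}\prod_{i'=0}^{i(k)-1}(X-\alpha_{i'}),
\end{equation*}
which has degree exactly $k$. Since $X^q-X=\prod_{l\in[q]}(X-\alpha_l)$, the polynomial $P_k$ vanishes at $\alpha_i$ to order $s(k)+1$ when $i<i(k)$ and to order $s(k)$ when $i\geq i(k)$. A short case analysis then shows that $H(P_k,s)(\alpha_i)=0$ whenever $i+sq<k$, while $H(P_k,s(k))(\alpha_{i(k)})\neq 0$. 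Hence $\{P_0,\dotsc,P_d\}$ is a basis of $\F_q[X]_d$ for which the matrix $(H(P_k,s)(\alpha_i))$, with rows ordered by $i+sq$ and columns by $k$, is triangular with nonzero diagonal, settling the case $n=1$.

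For general $n$, I would work with the tensor basis $P_{\vec{k}}(\vec{X})=\prod_{j=1}^n P_{k_j}(X_j)$ indexed by $\left|\vec{k}\right|\leq d$. Expanding the defining identity $F(\vec{X}+\vec{Z})=\sum_{\vec{s}}H(F,\vec{s})(\vec{X})\vec{Z}^{\vec{s}}$ for such a separable product factorises the Hasse derivatives across variables, giving
\begin{equation*}
	H(P_{\vec{k}},\vec{s})(\vec{\alpha}_{\vec{i}}) = \prod_{j=1}^n H(P_{k_j},s_j)(\alpha_{i_j}).
\end{equation*}
Applying the univariate vanishing statement coordinatewise, this product is zero unless $\vec{i}+\vec{s}q\geq\vec{k}$ componentwise, and is nonzero on the diagonal $\vec{i}+\vec{s}q=\vec{k}$. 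Reindexing rows via the digit-extraction bijection and ordering both rows and columns by any linear extension of the componentwise order on $\{\vec{k}\mid\left|\vec{k}\right|\leq d\}$ (for instance, refining the weight order by lex) then renders the evaluation matrix triangular with nonzero diagonal, and hence invertible.

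The main obstacle is the univariate construction: identifying the polynomials $P_k$ and verifying that their vanishing profile at the points $\alpha_i$ interacts with the indexing $(i,s)\mapsto i+sq$ to yield exactly the triangular pattern required. Once this univariate triangularity is in hand, the passage to $n$ variables is essentially formal, since the degree constraint $\left|\vec{k}\right|\leq d$ is preserved under any linear extension of the componentwise order used to witness triangularity.
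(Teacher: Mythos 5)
Your argument is correct and mirrors the paper's own treatment: the polynomials $P_k$ are precisely the extended Newton polynomials $N_k$ introduced in Section~\ref{sec:hermite_definition}, your vanishing profile is exactly Lemma~\ref{lem:basis_property}, and your triangular-matrix conclusion is a reformulation of the induction proving Lemma~\ref{lem:existence_uniqueness}. The paper cites Theorem~\ref{thm:info_set} rather than proving it directly, but Lemmas~\ref{lem:basis_property} and~\ref{lem:existence_uniqueness} together constitute essentially the proof you have written.
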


We let $\mathrm{enc}^s_d=\eval^s_d\circ\eval^{-1}_{\Ical_{d,n}}$ denote the systematic encoding function of $\Mult^s_d$ provided by Theorem~\ref{thm:info_set}. A codeword of a multiplicity code $\Mult^s_d$ contains $q^n$ elements of $\F^{\sigma_{s,n}}_q$, and thus contains $\sigma_{s,n}q^n$ field elements in total. Consequently, if the encoding function $\mathrm{enc}^s_d$ is to be used in the private information retrieval protocol of Augot, Levy-dit-Vehel and Shikfa~\cite{augot2014}, then it is important that the function may be evaluated in time that is close to linear in $\sigma_{s,n}q^n$. Augot, Levy-dit-Vehel and Ng\^{o}~\cite[Appendix]{augot2015} show that $\mathrm{enc}^s_d$ can be evaluated in $\tilde{\bigO}(\sigma_{s,n}^3q^n+k^2)$ operations in~$\F_q$, where $k=\binom{n+d}{n}$ and the notation $\tilde{\bigO}({}\cdot{})$ indicates that polylogarithmic factors are omitted from the complexity. The quadratic dependency on the dimension of the code means that their algorithm is not suitable for use in the private information retrieval context, where $k\log_2q$ must be greater than or equal to the number of bits in the database, and $q$ is the number of (non-colluding) servers. However, we note that the cost of evaluating $\mathrm{enc}^s_d$ with their algorithm can be reduced to $\tilde{\bigO}(\sigma_{s,n}^3q^n)$ operations in~$\F_q$ by replacing the matrix--vector products they use to perform multivariate interpolation with the quasi-linear time interpolation algorithm of van der Hoeven and Schost~\cite{hoeven2013}.

\subsection{Our contribution} 

In Sections~\ref{sec:low_rate} and~\ref{sec:high_rate}, we present two algorithms that evaluate the encoding function $\mathrm{enc}^s_d$ in $\bigO(\sigma_{s,n}q^nn\log^2(sq)\log\log(sq))$, or more simply $\tilde{\bigO}(\sigma_{s,n}q^n)$, operations in $\F_q$. The algorithm of Section~\ref{sec:low_rate} combines fast polynomial interpolation and evaluation algorithms to first invert the map $\eval_{\Ical_{d,n}}$ then evaluate~$\eval^s_d$. The algorithm of Section~\ref{sec:high_rate} follows a similar interpolation--evaluation approach, but aims to trade a more expensive interpolation step for a cheaper evaluation step. While the two encoding algorithms achieve the same asymptotic complexity, comparing lower order terms of their complexities suggests that they outperform each other at opposing ends of the rate spectrum, with the algorithm of Section~\ref{sec:low_rate} being faster for low-rate codes. Consequently, the two encoding algorithms provide complementary practical performance. 

For the private information retrieval protocol of Augot, Levy-dit-Vehel and Shikfa~\cite{augot2014} one desires to use multiplicity codes with high rates in order to obtain small storage overheads. However, storage overhead must be balanced with other aspects of the protocol when choosing parameters for the codes. The problem of parameter selection is yet to be addressed in the literature, and it remains unclear as to which rates will occur in practice. We do not address this problem here since it is out of the scope of the paper. As a result, we are prevented from determining if one of the two encoding algorithms is better suited to this application.

The interpolation and evaluation algorithms that make up the two encoding algorithms have their origins in the quasi-linear time multivariate interpolation and evaluation algorithms of van der Hoeven and Schost~\cite{hoeven2013}. In Section~\ref{sec:hermite}, we generalise their algorithms to address certain multivariate Hermite interpolation and evaluation problems. Thus, we provide algorithms for recovering multivariate polynomials from their Hasse derivatives, as well as for the inverse problem of computing their derivatives.

The algorithms of van der Hoeven and Schost are recursive in nature, reducing each problem to multiple instances of the same problem in a single variable. Solving the univariate problems in quasi-linear time, then leads to an overall quasi-linear time algorithm. Our Hermite interpolation and evaluation algorithms similarly reduce the multivariate problems to multiple instances of the univariate problems. Applying the quasi-linear time algorithms of Chin~\cite{chin1976} to these univariate instances then yields multivariate algorithms with quasi-linear complexity.

\subsection*{Conventions}

We let $\mult:\N\setminus\{0\}\rightarrow\N$ denote a function such that two univariate polynomials over $\F_q$ of degree less than $k$ can be multiplied in $\mult(k)$ operations in $\F_q$. For example, the algorithm of Cantor and Kaltofen~\cite{cantor1991} implies that $\mult(k)$ may be taken to be in $\bigO(k\log k\log\log k)$.  Throughout the paper, we assume that $\mult(k)/k$ is a nondecreasing function of $k$.

We make frequently use of the shorthand vector notation $(f_{\vec{i}})_{\vec{i}\in I}$ for sets $I\subseteq\N^\ell$. So that this notation is well-defined, we order the entries $f_{\vec{i}}$ by increasing weight~$\left|\vec{i}\right|$ of their multi-indices, with ties broken lexicographically. Similarly, for sets $\mathcal{I}\subseteq[q]^\ell\times\N^\ell$, we assume that the entries of a vector $(f_{(\vec{i},\vec{s})})_{(\vec{i},\vec{s})\in\mathcal{I}}$ are ordered by increasing $\left|\vec{i}+\vec{s}q\right|$, with ties broken by comparing the vectors $\vec{i}+\vec{s}q$ lexicographically.

For $\vec{i}=(i_1,\dotsc,i_n)\in\Z^n$ and $j\in\N\setminus\{0\}$, we define $\vec{i}\bdiv{j}=(\floor{i_1/j},\dotsc,\floor{i_n/j})$ and $\vec{i}\bmod{j}=\vec{i}-(\vec{i}\bdiv{j})j$. Similarly, for $F,G\in\F_q[X]$ such that $\deg G>0$, we write $F\bmod{G}$ for the residue of $F$ modulo $G$ that has degree less than $\deg G$.

\section{Multivariate Hermite interpolation and evaluation}\label{sec:hermite}

The interpolation algorithm of van der Hoeven and Schost~\cite{hoeven2013}, when applied over~$\F_q$, takes as an input a vector of field elements $(m_{\vec{j}})_{\vec{j}\in I}$ for some $I\subseteq[q]^n$, and returns the unique polynomial $F\in\F_q[\vec{X}]$ that has support contained in $I$ and satisfies $F(\vec{\alpha}_{\vec{j}})=m_{\vec{j}}$ for $\vec{j}\in I$. Their evaluation algorithm performs the inverse computation, evaluating a polynomial with support contained in $I$ at the points $\vec{\alpha}_{\vec{j}}$ for $\vec{j}\in I$. Both algorithms require $I$ to be an initial segment for the partial order~$\leq$ on $\N^n$ defined by $\vec{i}\leq\vec{j}$ if and only if $\vec{j}-\vec{i}\in\N^n$: a subset $I\subseteq\N^n$ is then an initial segment if it is nonempty and contains all $\vec{i}\in\N^n$ such that $\vec{i}\leq\vec{j}$ for some $\vec{j}\in I$. 

For $I\subseteq\N^n$, let $\F_q[\vec{X}]_I$ denote the vector space of polynomials in $\F_q[\vec{X}]$ that have support contained in $I$. Then a key feature of the algorithms of van der Hoeven and Schost is the representation of polynomials in $\F_q[\vec{X}]_I$, where $I\subseteq[q]^n$ is an initial segment, with respect to a multivariate Newton basis. This basis consists of the polynomials
\begin{equation*}
	N_{\vec{i}}(\vec{X})=N_{i_1}(X_1)\dotsm N_{i_n}(X_n)
	\quad\text{for $\vec{i}=(i_1,\dotsc,i_n)\in I$},
\end{equation*}
where
\begin{equation*}
	N_i(X)=(X-\alpha_0)\dotsm(X-\alpha_{i-1})
	\quad\text{for $i\in[q]$}
\end{equation*}
are the Newton polynomials associated with the enumeration of the field. The Newton basis polynomial $N_{\vec{i}}$ vanishes at all points $\vec{\alpha}_{\vec{j}}$ with $\vec{j}\ngeq\vec{i}$, allowing van der Hoeven and Schost to address the interpolation and evaluation problems one variable at a time in a manner similar to the earlier work of Pan~\cite{pan1994}. In doing so, they obtain algorithms for both problems that each perform $\bigO(\left|I\right|n\log^2\left|I\right|\log\log\left|I\right|)$ field operations.

In this section, we generalise the interpolation and evaluation algorithms of van der Hoeven and Schost to address multivariate Hermite interpolation and evaluation problems. The generalised algorithms yield analogous complexities to those of the algorithms of van der Hoeven and Schost. Thus, they allow the fast recovery of polynomials from the values of their Hasse derivatives, in addition to allowing the fast evaluation of their derivatives.

\subsection{Hermite interpolation and evaluation}\label{sec:hermite_definition}

We generalise the interpolation and evaluation problems considered by van der Hoeven and Schost through generalising the use of the multivariate Newton basis. To allow initial segments that are not contained in $[q]^n$, we extend the definition of the Newton basis by introducing repeated roots to the basis polynomials. We define
\begin{equation*}
	N_i(X)=\prod^{i-1}_{j=0}\left(X-\alpha_{j\bmod{q}}\right)
	\quad\text{for $i\in\N$},
\end{equation*}
and $N_{\vec{i}}(\vec{X})=N_{i_1}(X_1)\dotsm N_{i_n}(X_n)$ for $\vec{i}=(i_1,\dotsc,i_n)\in\N^n$. Then, for $\vec{i}\in\N^n$, the polynomial $N_{\vec{i}}$ may be written in the form $\sum_{\vec{k}\leq\vec{i}}n_{\vec{k}}\vec{X}^{\vec{k}}$ with coefficients $n_{\vec{k}}\in\F_q$ and $n_{\vec{i}}=1$. Therefore, under the extended definition we retain the property that $\{N_{\vec{i}}\mid\vec{i}\in I\}$ is a basis of $\F_q[\vec{X}]_I$ when $I\subseteq\N^n$ is an initial segment. However, having introduced repeated roots to the basis polynomials, the vanishing property of the Newton basis now extends to include the Hasse derivatives of the basis polynomials.

\begin{lemma}\label{lem:basis_property} For $\vec{i}\in\N^n$ and $(\vec{j},\vec{t})\in[q]^n\times\N^n$, we have $H(N_{\vec{i}},\vec{t})(\vec{\alpha}_{\vec{j}})=0$ if $\vec{j}+\vec{t}q\ngeq\vec{i}$, and $H(N_{\vec{i}},\vec{t})(\vec{\alpha}_{\vec{j}})\neq 0$ if $\vec{j}+\vec{t}q=\vec{i}$.
\end{lemma}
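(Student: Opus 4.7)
The plan is to exploit the product structure of the multivariate Newton polynomial and reduce everything to the univariate case.

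First, I would observe that since $N_{\vec{i}}(\vec{X})=\prod_{k=1}^n N_{i_k}(X_k)$ is a product of polynomials in disjoint variables, substituting $\vec{X}+\vec{Z}$ and expanding yields
\begin{equation*}
	N_{\vec{i}}(\vec{X}+\vec{Z})
	=\prod_{k=1}^{n}\sum_{t_k\in\N}H(N_{i_k},t_k)(X_k)Z_k^{t_k}
	=\sum_{\vec{t}\in\N^n}\left(\prod_{k=1}^{n}H(N_{i_k},t_k)(X_k)\right)\vec{Z}^{\vec{t}},
\end{equation*}
so that $H(N_{\vec{i}},\vec{t})(\vec{\alpha}_{\vec{j}})=\prod_{k=1}^n H(N_{i_k},t_k)(\alpha_{j_k})$. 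Consequently, it suffices to prove the two claims in the univariate case: that $H(N_i,t)(\alpha_j)=0$ whenever $j+tq<i$, and that $H(N_i,t)(\alpha_j)\neq 0$ when $j+tq=i$ (with $j\in[q]$). The multivariate vanishing statement then follows from at least one factor vanishing, and the nonvanishing statement from every factor being nonzero.

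Next, for the univariate setting I would count the multiplicity $m$ of $\alpha_j$ as a root of $N_i$, i.e., the number of integers $k\in\{0,\dotsc,i-1\}$ with $k\equiv j\pmod{q}$. Writing $i=aq+b$ with $0\leq b<q$ shows $m=a+1$ if $j<b$ and $m=a$ if $b\leq j<q$; in either case, provided $j<i$, one has $m=\lceil(i-j)/q\rceil$, which is the smallest integer $t$ with $j+tq\geq i$. (If $j\geq i$, then $m=0$ and the condition $j+tq\geq i$ holds for all $t\in\N$.) So the condition $j+tq\geq i$ is equivalent to $t\geq m$.

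Then I would write $N_i(X)=(X-\alpha_j)^m G(X)$ with $G(\alpha_j)\neq 0$, and use the Leibniz-type rule for Hasse derivatives together with the expansion $(X-\alpha_j+Z)^m=\sum_{k}\binom{m}{k}(X-\alpha_j)^{m-k}Z^k$ to obtain
\begin{equation*}
	H(N_i,t)(X)=\sum_{k=0}^{t}\binom{m}{k}(X-\alpha_j)^{m-k}H(G,t-k)(X).
\end{equation*}
Evaluating at $X=\alpha_j$ kills every term with $k<m$ (the factor $(X-\alpha_j)^{m-k}$ vanishes) and every term with $k>m$ (the binomial vanishes), leaving only the $k=m$ term. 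Hence $H(N_i,t)(\alpha_j)=0$ for $t<m$, while $H(N_i,m)(\alpha_j)=G(\alpha_j)\neq 0$. Combined with the equivalence $j+tq\geq i\iff t\geq m$ established above, this gives both univariate claims and thereby, via the product decomposition, the lemma.

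The only mild subtlety I anticipate is the bookkeeping in the multiplicity count that links $m$ to $\lceil(i-j)/q\rceil$ and ultimately to the condition $j+tq\geq i$; once that arithmetic identity is established, the rest is a direct application of the product rule and the standard behaviour of Hasse derivatives at a root of prescribed multiplicity.
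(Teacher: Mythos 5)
Your proof is correct and follows essentially the same approach as the paper's: both arguments identify the multiplicity of $\alpha_{j_\ell}$ as a root of the appropriate univariate Newton polynomial (your $m_\ell$, which matches the paper's $s_\ell+\varepsilon_\ell(j_\ell)$ after writing $\vec{i}=\vec{i}'\bmod q+(\vec{i}'\bdiv q)q$), observe that the shifted polynomial is divisible by $Z_\ell^{m_\ell}$ with nonzero cofactor at the evaluation point, and translate the arithmetic condition $j+tq\geq i$ into $t\geq m$. The only presentational differences are that you factor the multivariate Hasse derivative into univariate ones explicitly before computing, and you invoke a Leibniz-type rule rather than the paper's direct product formula; the underlying content is identical.
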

\begin{proof} It is sufficient to prove the lemma for all $N_{\vec{i}+\vec{s}q}$ with $(\vec{i},\vec{s})\in[q]^n\times\N^n$. Let $(\vec{i},\vec{s}),(\vec{j},\vec{t})\in[q]^n\times\N^n$ with $\vec{i}=(i_1,\dotsc,i_n)$, $\vec{s}=(s_1,\dotsc,s_n)$, $\vec{j}=(j_1,\dotsc,j_n)$ and $\vec{t}=(t_1,\dotsc,t_n)$. Then, for algebraically independent indeterminates $Z_1,\dotsc, Z_n$ over $\F_q[\vec{X}]$, the definition of the Hasse derivative implies that $H(N_{\vec{i}+\vec{s}q},\vec{t})(\vec{\alpha}_{\vec{j}})$ is equal to the coefficient of $\vec{Z}^{\vec{t}}=Z^{t_1}_1\dotsm Z^{t_n}_n$ in the polynomial $N_{\vec{i}+\vec{s}q}(\vec{Z}+\vec{\alpha}_{\vec{j}})\in\F_q[\vec{Z}]=\F_q[Z_1,\dotsc, Z_n]$. 

For $\ell\in\{1,\dotsc,n\}$, let $\varepsilon_\ell:\N\rightarrow\{0,1\}$ be the indicator function defined by $\varepsilon_\ell(k)=1$ if and only if $k<i_\ell$. Then
\begin{equation*}
	N_{i_\ell+s_\ell q}(X_\ell)
	=\prod_{k\in[q]}
	\left(X_\ell-\alpha_k\right)^{s_\ell+\varepsilon_\ell(k)}
	\quad\text{for $\ell=1,\dotsc,n$}.
\end{equation*}
Letting $\vec{\varepsilon}=(\varepsilon_1(j_1),\dotsc,\varepsilon_n(j_n))$, it follows that
\begin{equation}\label{eqn:basis_property}
	N_{\vec{i}+\vec{s}q}\left(\vec{Z}+\vec{\alpha}_{\vec{j}}\right)
	=\vec{Z}^{\vec{s}+\vec{\varepsilon}}
	\prod^n_{\ell=1}
	\prod_{k\in[q]\setminus\{j_\ell\}}
	\left(Z_\ell+\alpha_{j_\ell}-\alpha_k\right)^{s_\ell+\varepsilon_\ell(k)}.
\end{equation}
If $\vec{t}\geq\vec{s}+\vec{\varepsilon}$, then $\vec{j}+\vec{t}q\geq(\vec{j}+\vec{\varepsilon}q)+\vec{s}q\geq\vec{i}+\vec{s}q$. Therefore, if $\vec{j}+\vec{t}q\ngeq\vec{i}+\vec{s}q$, then $H(N_{\vec{i}+\vec{s}q},\vec{t})(\vec{\alpha}_{\vec{j}})=0$ since $\vec{t}\ngeq\vec{s}+\vec{\varepsilon}$ and $\vec{Z}^{\vec{s}+\vec{\varepsilon}}$ divides $N_{\vec{i}+\vec{s}q}(\vec{Z}+\vec{\alpha}_{\vec{j}})$ in $\F_q[\vec{Z}]$. If $\vec{j}+\vec{t}q=\vec{i}+\vec{s}q$, then $(\vec{j},\vec{t})=(\vec{i},\vec{s})$ and $\vec{\varepsilon}=\vec{0}$. By substituting into~\eqref{eqn:basis_property} and computing the coefficient of $\vec{Z}^{\vec{s}}=\vec{Z}^{\vec{s}+\vec{\varepsilon}}$, we find that
\begin{equation*}
	H\left(N_{\vec{i}+\vec{s}q},\vec{s}\right)(\vec{\alpha}_{\vec{i}})
	=\prod^n_{\ell=1}
	\prod_{k\in[q]\setminus\{i_\ell\}}
	\left(\alpha_{i_\ell}-\alpha_k\right)^{s_\ell+\varepsilon_\ell(k)},
\end{equation*}
which is nonzero.
\end{proof}

In the interpolation and evaluation problems considered by van der Hoeven and Schost, the initial segment $I$ is the support of both the polynomials and the evaluation points. In order to maintain this property when generalising these problems, we define $\evH(F,\vec{i})=H(F,\vec{i}\bdiv{q})(\vec{\alpha}_{\vec{i}\bmod{q}})$ for $\vec{i}\in\N^n$ and $F\in\F_q[\vec{X}]$. Then our problem of Hermite interpolation takes a vector $(m_{\vec{j}})_{\vec{j}\in I}$ of field elements for some finite initial segment $I\subseteq\N^n$ and asks that we compute the polynomial $F\in\F_q[\vec{X}]_I$ that satisfies $\evH(F,\vec{j})=m_{\vec{j}}$ for $\vec{j}\in I$. Our Hermite evaluation problem is the inverse problem, asking for the computation of the vector $(\evH(F,\vec{j}))_{\vec{j}\in I}$ when given a polynomial $F\in\F_q[\vec{X}]_I$. Importantly, Lemma~\ref{lem:basis_property} implies that $\evH(N_{\vec{i}},\vec{j})=0$ for all $\vec{i},\vec{j}\in\N^n$ such that $\vec{i}\nleq\vec{j}$, allowing us to address both problems by generalising the algorithms of van der Hoeven and Schost. Existence and uniqueness of a solution to the Hermite interpolation problem is provided by the following lemma.

\begin{lemma}\label{lem:existence_uniqueness} Let $I\subseteq\N^n$ be a finite initial segment and $(m_{\vec{j}})_{\vec{j}\in I}\in\F^{\left|I\right|}_q$. Then there exists a unique polynomial $F\in\F_q[\vec{X}]_I$ such that $\evH(F,\vec{j})=m_{\vec{j}}$ for $\vec{j}\in I$.
\end{lemma}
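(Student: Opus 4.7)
The plan is to exhibit the Hermite evaluation map as a linear isomorphism between two $\F_q$-vector spaces of the same (finite) dimension, by showing that its matrix with respect to a cleverly chosen pair of bases is triangular with nonzero diagonal.

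First, I would consider the $\F_q$-linear map
\begin{equation*}
	\Phi\ :\ \F_q[\vec{X}]_I\ \rightarrow\ \F^{|I|}_q,\quad F\ \mapsto\ (\evH(F,\vec{j}))_{\vec{j}\in I}.
\end{equation*}
Since $\{\vec{X}^{\vec{i}}\mid\vec{i}\in I\}$ is a basis of $\F_q[\vec{X}]_I$, both the source and target have dimension $|I|$, so the lemma reduces to showing that $\Phi$ is bijective. As observed in the paragraph preceding the extended Newton basis, $\{N_{\vec{i}}\mid\vec{i}\in I\}$ is also a basis of $\F_q[\vec{X}]_I$, and this is the basis I would use on the source side.

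Next, I would fix a linear extension of the partial order $\leq$ restricted to $I$, i.e., enumerate $I=\{\vec{i}_1,\dotsc,\vec{i}_{|I|}\}$ in such a way that $\vec{i}_a\leq\vec{i}_b$ forces $a\leq b$ (the weight-then-lex order from the paper's Conventions does the job, since $\vec{i}\leq\vec{j}$ implies $|\vec{i}|\leq|\vec{j}|$). Writing $M$ for the matrix of $\Phi$ relative to the Newton basis on the source and the standard basis on the target under this enumeration, we have $M_{a,b}=\evH(N_{\vec{i}_b},\vec{i}_a)$. For $a<b$ and distinct $\vec{i}_a,\vec{i}_b$, the choice of enumeration rules out $\vec{i}_b\leq\vec{i}_a$, so Lemma~\ref{lem:basis_property} (applied via the identification $\evH(F,\vec{j})=H(F,\vec{j}\bdiv q)(\vec{\alpha}_{\vec{j}\bmod q})$, which shows $\evH(N_{\vec{i}_b},\vec{i}_a)=0$ whenever $\vec{i}_b\nleq\vec{i}_a$) yields $M_{a,b}=0$. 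On the diagonal, Lemma~\ref{lem:basis_property} gives $M_{a,a}=\evH(N_{\vec{i}_a},\vec{i}_a)\neq 0$.

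Hence $M$ is lower triangular with nonzero diagonal entries, so $\det M\neq 0$ and $\Phi$ is an isomorphism, establishing both existence and uniqueness of the required $F$. The only mildly subtle point, which is what I would be most careful about, is verifying that $\evH(N_{\vec{i}},\vec{j})=0$ for $\vec{i}\nleq\vec{j}$ truly follows from Lemma~\ref{lem:basis_property}: setting $\vec{j}'=\vec{j}\bmod q$ and $\vec{t}=\vec{j}\bdiv q$, the condition $\vec{j}'+\vec{t}q=\vec{j}\ngeq\vec{i}$ is exactly the hypothesis of the lemma, so the vanishing is immediate. With that checked, the triangular argument closes the proof with no further work.
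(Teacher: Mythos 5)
Your proof is correct, and it takes a genuinely different route from the paper's. The paper also reduces to showing injectivity of $\eval_I$ by dimension-counting, but then argues by induction on the maximal weight $s$ appearing in $I$: it splits $I$ into $J=I\cap S_{s,n}$ and the top-weight layer $I\setminus J$, uses Lemma~\ref{lem:basis_property} to show a kernel element must vanish on the top layer, and then invokes the induction hypothesis on $J$. Your approach dispenses with the induction entirely by fixing a linear extension of $\leq$ on $I$ and observing that the resulting matrix of $\Phi$ (Newton basis on the source, standard basis on the target) is lower triangular with nonzero diagonal, both triangularity and nonvanishing of the diagonal being direct consequences of Lemma~\ref{lem:basis_property} once one unwinds $\evH(N_{\vec{i}},\vec{j})=H(N_{\vec{i}},\vec{j}\bdiv q)(\vec{\alpha}_{\vec{j}\bmod q})$. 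The two arguments are morally the same --- the paper's weight-peeling induction is implicitly doing the Gaussian-elimination steps that triangularity makes unnecessary --- but your formulation is more compact and exhibits the global structure of the map at once. One point worth being explicit about (and you do flag it, indirectly, by citing the Newton basis): the initial-segment hypothesis is what guarantees $\{N_{\vec{i}}\mid\vec{i}\in I\}\subset\F_q[\vec{X}]_I$, and without it the lemma can actually fail (e.g.\ $n=1$, $q=2$, $I=\{0,2\}$), so that invocation is not merely a convenience but the place where the hypothesis enters.
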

\begin{proof} If $I\subseteq\N^n$ is a finite initial segment, then $\F_q[\vec{X}]_I$ and $\F^{\left|I\right|}_q$ are $\left|I\right|$-dimensional $\F_q$-vector spaces, and $I\subseteq S_{s,n}$ for $s>\max_{\vec{i}\in I}\left|\vec{i}\right|$. Therefore, it is sufficient to prove following statement: for all positive $s\in\N$, if $I\subseteq S_{s,n}$ is an initial segment, then the homomorphism $\eval_I:\F_q[\vec{X}]_I\rightarrow\F^{\left|I\right|}_q$ given by $F\mapsto(\evH(F,\vec{j}))_{\vec{j}\in I}$ is injective. We prove this statement by induction on $s$. The statement holds trivially for $s=1$, since $\{\vec{0}\}$ is the only initial segment contained in $S_{1,n}$, and $\eval_{\{\vec{0}\}}:\F_q\rightarrow\F_q$ is the identity map. Therefore, suppose that the statement is true for some integer $s\geq 1$. Let $I\subseteq S_{s+1,n}$ be an initial segment, and $F\in\F_q[\vec{X}]_I$ such that $\evH(F,\vec{j})=0$ for $\vec{j}\in I$. Then to complete the proof of the lemma, it is sufficient to show that $F$ is equal to zero.
	
Let $J=I\cap S_{s,n}$. Then $J$ is an initial segment since $I$ and $S_{s,n}$ are initial segments. Moreover, if $\vec{i}\in I\setminus J$, then its weight $\left|\vec{i}\right|$ is maximal amongst the elements of $I$. Consequently, if $\vec{i}\in I\setminus J$, then $\vec{j}\ngeq\vec{i}$ for $\vec{j}\in I\setminus\{\vec{i}\}$. As Hasse derivatives are linear functions and evaluation is a homomorphism, the functions $\evH({}\cdot{},\vec{j}):\F_q[\vec{X}]\rightarrow\F_q$ for $\vec{j}\in\N$ are linear. Therefore, if we write $F=\sum_{\vec{i}\in I}f_{\vec{i}}N_{\vec{i}}$ such that $f_{\vec{i}}\in\F_q$ for all $\vec{i}\in I$, then Lemma~\ref{lem:basis_property} implies that
\begin{equation*}
	0
	=\evH\left(\sum_{\vec{i}\in J}f_{\vec{i}}N_{\vec{i}},\vec{j}\right)
	+\sum_{\vec{i}\in I\setminus J}f_{\vec{i}}\evH\left(N_{\vec{i}},\vec{j}\right)
	=\evH\left(\sum_{\vec{i}\in J}f_{\vec{i}}N_{\vec{i}},\vec{j}\right)
	\quad\text{for $\vec{j}\in J$}.
\end{equation*}
As $J\subseteq S_{s,n}$ is an initial segment, the induction hypothesis implies that $\sum_{\vec{i}\in J}f_{\vec{i}}N_{\vec{i}}$ is equal to zero. Applying Lemma~\ref{lem:basis_property} once again, it follows that
\begin{equation*}
	0
	=\sum_{\vec{i}\in I\setminus J}f_{\vec{i}}\evH\left(N_{\vec{i}},\vec{j}\right)
	=f_{\vec{j}}\evH\left(N_{\vec{j}},\vec{j}\right)
	\quad\text{for $\vec{j}\in I\setminus J$}.
\end{equation*}
Moreover, the lemma states that $\evH(N_{\vec{j}},\vec{j})\neq 0$ for $\vec{j}\in I\setminus J$. Therefore, $f_{\vec{i}}=0$ for $\vec{i}\in I\setminus J$. Hence, $F$ is equal to zero.
\end{proof}

Define $\kappa_n:[q]^n\times\N^n\rightarrow\N^n$ by $(\vec{i},\vec{s})\mapsto\vec{i}+\vec{s}q$. Then $\kappa_n(\Ical_{d,n})=\{\vec{i}\in\N^n\mid\left|\vec{i}\right|\leq d\}$ is a finite initial segment for $d\in\N$. Moreover, for $F\in\F_q[\vec{X}]_d=\F_q[\vec{X}]_{\kappa_n(\Ical_{d,n})}$ we have $(H(F,\vec{t})(\vec{\alpha}_{\vec{j}}))_{(\vec{j},\vec{t})\in\Ical_{d,n}}=(\evH(F,\vec{j}))_{\vec{j}\in\kappa_n(\Ical_{d,n})}$. Thus, the problem of computing the polynomial that corresponds to a message vector of a multiplicity code $\Mult^s_d$ is an instance of the Hermite interpolation problem with initial segment $I=\kappa_n(\Ical_{d,n})$. Similarly, the problem of encoding a polynomial as a codeword in $\Mult^s_d$, i.e., evaluating the map $\eval^s_d$ for some polynomial of degree at most $d<sq$, is an instance of the Hermite evaluation problem with initial segment $I=\kappa_n([q]^n\times S_{s,n})\supset\kappa_n(\Ical_{d,n})$. In Section~\ref{sec:low_rate}, we apply the fast algorithms developed in this section to these two instances to obtain a fast systematic encoding algorithm for low-rate codes, while in Section~\ref{sec:high_rate}, the interpolation algorithm is applied with $I=\kappa_n([q]^n\times S_{s,n})$ as part of the encoding algorithm for higher rate codes.

As we have no need to represent polynomials with respect to the monomial basis during encoding, we only require that the output of the interpolation algorithm and input of the evaluation algorithm are written on the Newton basis. Consequently, if $I\subseteq\N^n$ is an initial segment, then we write $F\dashv(N_{\vec{i}})_{\vec{i}\in I}$ for the vector of coefficients of $F\in\F_q[\vec{X}]_I$ when written on the basis $(N_{\vec{i}})_{\vec{i}\in I}$. That is, if $F=\sum_{\vec{i}\in I}f_{\vec{i}}N_{\vec{i}}$ such that the coefficients $f_{\vec{i}}\in\F_q$, then $F\dashv(N_{\vec{i}})_{\vec{i}\in I}=(f_{\vec{i}})_{\vec{i}\in I}$. Similarly, we write $F\dashv(\vec{X}^{\vec{i}})_{\vec{i}\in I}$ for the coefficient vector of $F$ when written on the monomial basis. To allow us to bound the size of a finite initial segment in each of its dimensions, we extend the notation $[q]$ by defining $[s]=\{0,1,\dotsc,s-1\}$ for positive $s\in\N$. Using this notation, we can state the main result of this section as follows.

\begin{theorem}\label{thm:multivariate_hermite} Let $I\subseteq\N^n$ be an initial segment such that $I\subseteq[s_1]\times\dotsb\times[s_n]$ for positive integers $s_1,\dotsc,s_n$. Then given the vector $(\evH(F,\vec{j}))_{\vec{j}\in I}$ for some polynomial $F\in\F_q[\vec{X}]_I$, the vector $F\dashv(N_{\vec{i}})_{\vec{i}\in I}$ can be computed in
\begin{equation}\label{eqn:multivariate_hermite}
	\bigO\left(
		\left(
			\frac{\mult(s_1)\log s_1}{s_1}
			+\dotsb+
			\frac{\mult(s_n)\log s_n}{s_n}
		\right)
		\left|I\right|
	\right)
\end{equation}
operations in $\F_q$. Conversely, given the vector $F\dashv(N_{\vec{i}})_{\vec{i}\in I}$ for some polynomial $F\in\F_q[\vec{X}]_I$, the vector $(\evH(F,\vec{j}))_{\vec{j}\in I}$ can be computed within the same bound on the number of operations in $\F_q$.
\end{theorem}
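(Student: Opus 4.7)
The plan is to reduce the $n$-variable problem to a collection of $(n{-}1)$-variable subproblems together with a collection of univariate subproblems, mirroring the variable-by-variable recursion of van der Hoeven and Schost but exploiting the vanishing property of $\evH$ on the extended Newton basis established in Lemma~\ref{lem:basis_property}.

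Write $\vec{X}'=(X_1,\dotsc,X_{n-1})$ and let $\pi:\N^n\to\N^{n-1}$ denote the projection onto the first $n-1$ coordinates. For $k\in\N$ and $\vec{j}'\in\N^{n-1}$, set $I_k=\{\vec{i}'\in\N^{n-1}\mid(\vec{i}',k)\in I\}$ and $J_{\vec{j}'}=\{j_n\in\N\mid(\vec{j}',j_n)\in I\}$; both are initial segments because $I$ is, with $I_k\subseteq[s_1]\times\dotsb\times[s_{n-1}]$ and $J_{\vec{j}'}\subseteq[s_n]$. A key combinatorial observation is the equivalence $k\in J_{\vec{j}'}\Leftrightarrow\vec{j}'\in I_k$. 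Decomposing $F=\sum_{k}F_k(\vec{X}')N_k(X_n)$ with $F_k\in\F_q[\vec{X}']_{I_k}$, and applying the Hasse shift to each factor independently, one obtains the decoupling identity
\begin{equation*}
	\evH(F,(\vec{j}',j_n))
	=\sum_{k}\evH(F_k,\vec{j}')\,\evH(N_k,j_n),
\end{equation*}
in which, by Lemma~\ref{lem:basis_property} applied to the univariate factors $N_k$, only terms with $k\leq j_n$ can contribute; when $(\vec{j}',j_n)\in I$ these are exactly the $k\in J_{\vec{j}'}$.

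The evaluation algorithm proceeds in two phases: (a) recursively solve the $(n{-}1)$-variable Hermite evaluation problem on each $F_k$ over the initial segment $I_k$, producing $(\evH(F_k,\vec{j}'))_{\vec{j}'\in I_k}$; (b) for each $\vec{j}'\in\pi(I)$, assemble the univariate polynomial $G_{\vec{j}'}=\sum_{k\in J_{\vec{j}'}}\evH(F_k,\vec{j}')N_k(X_n)$ on its Newton basis and apply Chin's~\cite{chin1976} univariate Hermite evaluation algorithm to output $(\evH(G_{\vec{j}'},j_n))_{j_n\in J_{\vec{j}'}}$, which by the decoupling identity equals $(\evH(F,(\vec{j}',j_n)))_{j_n\in J_{\vec{j}'}}$. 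The equivalence $k\in J_{\vec{j}'}\Leftrightarrow\vec{j}'\in I_k$ ensures that the outputs of phase~(a) are exactly the Newton coefficients needed in phase~(b), and that each univariate subproblem is square. The interpolation algorithm is the same scheme run in reverse: first recover each $G_{\vec{j}'}$ from its Hermite values using Chin's univariate Hermite interpolation, then recurse on the $F_k$; Lemma~\ref{lem:existence_uniqueness} guarantees that every intermediate polynomial is uniquely determined.

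For the complexity, Chin's algorithm costs $\bigO(\mult(r)\log r)$ on an instance of size $r$. Using the monotonicity of $\mult(k)/k$ to bound each univariate cost by $(\mult(s_n)/s_n)\left|J_{\vec{j}'}\right|\log s_n$, and using $\sum_{\vec{j}'\in\pi(I)}\left|J_{\vec{j}'}\right|=\left|I\right|$, phase~(b) contributes $\bigO((\mult(s_n)\log s_n/s_n)\left|I\right|)$ operations in $\F_q$. By induction on $n$, phase~(a) costs $\bigO\bigl((\sum_{\ell<n}\mult(s_\ell)\log s_\ell/s_\ell)\sum_{k}\left|I_k\right|\bigr)=\bigO\bigl((\sum_{\ell<n}\mult(s_\ell)\log s_\ell/s_\ell)\left|I\right|\bigr)$, since $\sum_k\left|I_k\right|=\left|I\right|$; the base case $n=1$ is precisely Chin's result. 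Combining gives the bound~\eqref{eqn:multivariate_hermite}. The main obstacle is the decoupling identity together with the index-set alignment $\{k:\vec{j}'\in I_k\}=J_{\vec{j}'}$, which is what makes each univariate subproblem square and lets the recursion telescope cleanly without any padding, truncation, or auxiliary change of basis.
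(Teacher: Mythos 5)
Your proof follows essentially the same route as the paper: the decoupling identity you write down is the paper's Lemma~\ref{lem:recursion}, your index sets $I_k$ and $J_{\vec{j}'}$ are the paper's $\lambda(I,k)$ and $\rho(I,\vec{j}')$, and the complexity argument (monotonicity of $\mult(k)/k$ together with $\sum_{\vec{j}'}\left|J_{\vec{j}'}\right|=\left|I\right|$) is identical to Lemmas~\ref{lem:multi_hermite_eval} and~\ref{lem:multi_hermite_interp}; the only difference is that you phrase the algorithm recursively while the paper unrolls it into an iterative form. One small imprecision worth fixing: the contributing terms in the decoupling identity are the $k\leq j_n$, which form a proper \emph{subset} of $J_{\vec{j}'}$ in general rather than all of it, but since $J_{\vec{j}'}$ is an initial segment containing $j_n$ the extra terms vanish under $\evH({}\cdot{},j_n)$ and the argument goes through unchanged.
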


Theorem~\ref{thm:multivariate_hermite} directly generalises the bounds obtained by van der Hoeven and Schost~\cite[Propostions~2 and~3]{hoeven2013} for interpolation and evaluation. By letting $s_1,\dotsc,s_n$ equal $\left|I\right|$, and taking $\mult(k)$ to be in $\bigO(k\log k\log\log k)$, the bound~\eqref{eqn:multivariate_hermite} simplifies to $\bigO(\left|I\right|n\log^2\left|I\right|\log\log\left|I\right|)$, matching the bound for the algorithms of van der Hoeven and Schost stated at the beginning of the section.

In other settings it may be preferable to have the output of the Hermite interpolation algorithm or the input of the Hermite evaluation algorithm represented with respect to the monomial basis. For univariate polynomials, conversion between the Newton and monomial bases can be performed in quasi-linear time by the algorithms discussed in the next section. These algorithms extend to multivariate polynomials by applying the approach of van der Hoeven and Schost~\cite[Section~4]{hoeven2013}. Using these algorithms, it is possible to preserve the bound~\eqref{eqn:multivariate_hermite} while having the input and output polynomials of the Hermite interpolation and evaluation algorithms given on the monomial basis.

The remainder of this section is devoted to proving Theorem~\ref{thm:multivariate_hermite}. We begin in the next section by reviewing existing fast algorithms for solving the Hermite interpolation and evaluation problems in univariate case. Then we complete the proof of the theorem by generalising the multivariate interpolation and evaluation algorithms of van der Hoeven and Schost in Section~\ref{sec:multivariate}.

\subsection{Univariate algorithms}\label{sec:univariate}

Hermite interpolation and evaluation for univariate polynomials can be performed in quasi-linear time with respect to the monomial basis by the algorithms of Chin~\cite{chin1976}.  In these algorithms, derivative is taken to mean the formal derivative rather than the Hasse derivative, as required here. However, by using the fact that the $i$th formal derivative is equal to $i!$ times the $i$th Hasse derivative, it is readily shown that only superficial changes to Chin's algorithms are required to allow them to work with the Hasse derivative. We note that the convolution-based algorithm of Aho, Steiglitz and Ullman~\cite{aho1975} that is used by Chin to compute Taylor shifts of polynomials cannot be used if the characteristic of the field is not greater than their degrees. In this case, the convolution-based algorithm may be replaced by the algorithm of Olshevsky and Shokrollahi~\cite[Section~4.2]{olshevsky2000} (see also~\cite{gathen1990,gathen1997}), which is slower by a logarithmic factor.

Each finite initial segment in $\N$ is of the form $[s]$ for some positive integer~$s$. For the Hermite interpolation and evaluation problems defined by these initial segments, applying Chin's algorithms with modifications just described provides the following complexity bounds.

\begin{lemma}\label{lem:univariate_hermite} Let $s\in\N$ be positive and $F\in\F_q[X]_{[s]}$. Then given $(\evH(F,j))_{j\in[s]}$, the vector $F\dashv(X^i)_{i\in[s]}$ can be computed in $\bigO(\mult(s)\log s)$ operations in $\F_q$. Conversely, given $F\dashv(X^i)_{i\in[s]}$, the vector $(\evH(F,j))_{j\in[s]}$ can be computed in $\bigO(\mult(s)\log s)$ operations in $\F_q$.
\end{lemma}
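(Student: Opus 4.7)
My plan is to reduce both assertions to Chin's univariate Hermite interpolation and evaluation algorithms, as indicated in the preceding discussion.

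First I would verify that $(\evH(F,j))_{j\in[s]}$ encodes a standard confluent Hermite problem. Writing $s = qa + b$ with $0\leq b < q$ and using the definition $\evH(F,j) = H(F, j\bdiv q)(\alpha_{j\bmod q})$, one sees that each evaluation point $\alpha_k$ carries the Hasse derivatives of orders $0,1,\dotsc,a$ when $k<b$, and orders $0,1,\dotsc,a-1$ when $b\leq k<q$. The total number of derivative values is $qa+b=s$, matching $\dim\F_q[X]_{[s]}$, so (in agreement with the univariate case of Lemma~\ref{lem:existence_uniqueness}) this is a well-posed confluent Hermite interpolation/evaluation problem on the nodes $\alpha_0,\dotsc,\alpha_{q-1}$.

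I would then invoke Chin's algorithms with the two adaptations outlined in the paragraphs just above. The first adaptation rewrites the algorithms so that they manipulate Hasse derivatives rather than formal derivatives: the identity $F^{(i)} = i!\,H(F,i)$ translates each internal step, and the factorials must be tracked carefully so that no division by a quantity that vanishes modulo $\mathrm{char}(\F_q)$ is performed. The second adaptation, invoked only when $\mathrm{char}(\F_q)\leq s$, replaces the convolution-based Taylor-shift routine of Aho, Steiglitz and Ullman by the algorithm of Olshevsky and Shokrollahi. With these in place, Chin's divide-and-conquer strategy performs both directions---values to coefficients, and coefficients to values---within the claimed bound of $\bigO(\mult(s)\log s)$ operations in $\F_q$, and the two statements of the lemma are the two sides of this common procedure.

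The main obstacle is the first adaptation: one must verify that the Hasse derivative can be threaded through Chin's recursion without any step becoming ill-defined in small characteristic. The author indicates that the required modifications are merely superficial, and the verification, though tedious, is straightforward once $F^{(i)} = i!\,H(F,i)$ is applied consistently so that all factorials cancel before any inversion is attempted.
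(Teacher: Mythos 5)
Your proposal follows the paper's approach exactly: the paper does not give a formal proof of this lemma but supports it with the discussion immediately preceding it, which is precisely what you cite --- reduce to Chin's Hermite interpolation/evaluation algorithms, replace formal derivatives by Hasse derivatives, and swap the Aho--Steiglitz--Ullman Taylor-shift routine for the Olshevsky--Shokrollahi one when the characteristic is small. Your analysis of which nodes carry which derivative orders is correct and confirms the well-posedness already covered by the univariate case of Lemma~\ref{lem:existence_uniqueness}.

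One small point worth sharpening: your stated worry --- tracking factorials through Chin's recursion so that ``all factorials cancel before any inversion is attempted'' --- slightly mislocates where the characteristic issue actually lives. The Hasse derivatives $H(F,i)(\alpha)$ are by definition the Taylor coefficients $[Z^i]F(Z+\alpha)$, and these are exactly what Chin's divide-and-conquer manipulates internally (via a remainder tree followed by Taylor shifts at the leaves); the factorials in Chin's original presentation appear only at the interface when converting to and from formal derivatives, so the ``superficial change'' is simply to omit those conversions. The genuine small-characteristic obstruction is confined to the convolution-based Taylor shift of Aho--Steiglitz--Ullman, which internally multiplies and divides by factorials, and that is precisely the subroutine the paper proposes to replace. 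Since the Taylor shifts occur only at the leaves of the remainder tree on polynomials of degree roughly $s/q$, the extra logarithmic factor in the replacement routine is absorbed and the overall bound $\bigO(\mult(s)\log s)$ is preserved, as the lemma claims.
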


Closely related alternatives to Chin's algorithms that provide the same complexity bounds are given by Olshevsky and Shokrollahi~\cite{olshevsky2000} and texts~\cite[Chapter~3]{pan2001}, \cite[Chapter~1, Section~4]{bini1994} and~\cite[Exercise~3.14]{burgisser1997}. In situations where precomputation is permitted, the asymptotic complexity of these algorithms and Chin's algorithms may be improved upon by using the techniques described by van der Hoeven~\cite{hoeven2016}.

Combining Lemma~\ref{lem:univariate_hermite} with the following result of Gerhard~\cite{gerhard2000} completes the proof of Theorem~\ref{thm:multivariate_hermite} for the univariate case.

\begin{lemma}\label{lem:univariate_conversion} Let $s\in\N$ be positive and $F\in\F_q[X]_{[s]}$. Then given $F\dashv(N_i)_{i\in[s]}$, the vector $F\dashv(X^i)_{i\in[s]}$ can be computed in $\bigO(\mult(s)\log s)$ operations in $\F_q$. Conversely, given $F\dashv(X^i)_{i\in[s]}$, the vector $F\dashv(N_i)_{i\in[s]}$ can be computed in $\bigO(\mult(s)\log s)$ operations in $\F_q$.
\end{lemma}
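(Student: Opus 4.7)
The plan is to note that the lemma is attributed to Gerhard, so the shortest route is to invoke his algorithms for conversion between the Newton and monomial bases and to check that they still apply when the sequence of interpolation nodes is allowed to contain repetitions, as is the case for the extended Newton basis introduced in Section~\ref{sec:hermite_definition}. Gerhard's algorithms only access the underlying node sequence $\alpha_{0\bmod q},\alpha_{1\bmod q},\dotsc$ through the polynomial products $N_i$, so repeated roots present no obstacle. What follows sketches why the stated $\bigO(\mult(s)\log s)$ bound survives this extension.

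First I would precompute a subproduct tree over the nodes $\alpha_{0\bmod q},\dotsc,\alpha_{(s-1)\bmod q}$, in which the $i$th node at level $k$ (counting from the leaves) stores the polynomial $\prod_{j=i2^k}^{(i+1)2^k-1}(X-\alpha_{j\bmod q})$. The standard construction builds this tree in $\bigO(\mult(s)\log s)$ operations in $\F_q$, and makes $N_m$ and its shifted analogues available in every recursive call below. For the Newton-to-monomial direction, set $m=\floor{s/2}$ and split $F=\sum_{i<s}f_iN_i$ as $F=G_0+N_mG_1$, where $G_0=\sum_{i<m}f_iN_i$ and $G_1=\sum_{i<s-m}f_{m+i}\widetilde{N}_i$ with $\widetilde{N}_i(X)=\prod_{j=m}^{m+i-1}(X-\alpha_{j\bmod q})$; the identity $N_{m+i}=N_m\widetilde{N}_i$ underlying the split is immediate from the definition of $N_i$. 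Recursively converting $G_0$ and $G_1$ to the monomial basis, reading $N_m$ from the tree, and performing one multiplication and addition yields $T(s)\leq T(m)+T(s-m)+\bigO(\mult(s))$, which resolves to $\bigO(\mult(s)\log s)$ since $\mult(k)/k$ is nondecreasing.

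The reverse direction uses the same decomposition, but now extracted by a single Euclidean division $F=N_mQ+R$ with $\deg R<m$ and $\deg Q<s-m$: then $(f_i)_{i<m}$ are the Newton coefficients of $R$ on the first $m$ basis polynomials, while $(f_{m+i})_{i<s-m}$ are those of $Q$ on the shifted basis $(\widetilde{N}_i)_i$, both produced by recursive calls. Fast division contributes $\bigO(\mult(s))$ per level, giving the same overall bound. The main thing to verify, and the only place the extended definition matters, is that the Newton basis associated with any contiguous block of $s'$ consecutive nodes is itself an extended Newton basis of the form treated by the lemma; this is clear because the node sequence $\alpha_{j\bmod q}$ is periodic of period $q$, so the recursion is self-similar and no machinery beyond what has been described is required.
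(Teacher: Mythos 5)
The paper offers no proof of this lemma at all: it simply attributes the result to Gerhard~\cite{gerhard2000} and moves on. Your proposal therefore goes further than the paper by reconstructing the underlying divide-and-conquer basis-conversion algorithms, and the reconstruction is accurate: build the subproduct tree, split at $m=\floor{s/2}$ via $N_{m+i}=N_m\widetilde{N}_i$, use one product (Newton-to-monomial) or one fast Euclidean division (monomial-to-Newton) at each level, and invoke the nondecreasing property of $\mult(k)/k$ to resolve the recurrence. Your key observation — that the algorithms interact with the node sequence only through the products $N_i$, so that the repeated roots in the extended Newton basis pose no obstacle — is exactly the point one must check to make the citation rigorous under the paper's generalised definition of $N_i$, and the paper itself passes over this in silence. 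One small imprecision: a contiguous block of nodes starting at index $m$ gives the cyclically shifted sequence $\alpha_{m\bmod q},\alpha_{(m+1)\bmod q},\dotsc$, which is not literally the same as the sequence $\alpha_{0\bmod q},\alpha_{1\bmod q},\dotsc$, so the recursion is not ``self-similar'' in the strict sense you describe; but this is harmless, since the algorithm works for an arbitrary fixed node sequence and the subproduct tree already stores every intermediate product needed for the shifted subproblems.
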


When converting from the monomial basis to the Newton basis, the algorithm of Gerhard is improved upon in practice by the algorithm of Bostan and Schost~\cite{bostan2005}.

\subsection{Multivariate algorithms}\label{sec:multivariate}

By design, the Hermite interpolation and evaluation problems allow the algorithms of van der Hoeven and Schost to be generalised in a straightforward manner. However, we follow a slightly different course by presenting the generalised algorithms in an iterative, rather than recursive, form. This small change is used to simplify the description of modifications to the algorithms that are made in the encoding context.

We begin by introducing some geometric operations on initial segments. For $I\subseteq\N^n$ and $\vec{i}=(i_1,\dotsc,i_\ell)\in\N^\ell$ such that $1\leq\ell<n$, define
\begin{equation*}
	\lambda(I,\vec{i})
	=\left\{
		(j_1,\dotsc,j_{n-\ell})\in\N^{n-\ell}
		\mid (j_1,\dotsc,j_{n-\ell},i_1,\dotsc,i_\ell)\in I
	\right\}
\end{equation*}
and
\begin{equation*}
	\rho(I,\vec{i})
	=\left\{
		(j_1,\dotsc,j_{n-\ell})\in\N^{n-\ell}
		\mid (i_1,\dotsc,i_\ell,j_1,\dotsc,j_{n-\ell})\in I
	\right\}.
\end{equation*}
Let $\vec{0}_\ell$ denote the $\ell$-dimensional vector of zeros. Then, given an initial segment $I\subseteq\N^n$ and a positive integer $\ell<n$, the set $\lambda(I,\vec{0}_\ell)$ is the projection of $I$ onto the $(i_1,\dotsc,i_{n-\ell})$-coordinate plane, while $\rho(I,\vec{0}_\ell)$ is the projection of $I$ onto the $(i_{\ell+1},\dotsc,i_n)$-coordinate plane. Consequently, if $F\in\F_q[\vec{X}]_I$ has coefficient vector $F\dashv(N_{\vec{i}})_{\vec{i}\in I}=(f_{\vec{i}})_{\vec{i}\in I}$, then
\begin{equation}\label{eqn:collected_coefficients}
	F=\sum_{i_n\in\rho(I,\vec{0}_{n-1})}F_{i_n}(X_1,\dotsc,X_{n-1})N_{i_n}(X_n)
\end{equation}
where
\begin{equation}\label{eqn:coefficients}
	F_{i_n}\left(X_1,\dotsc,X_{n-1}\right)
	=\sum_{(i_1,\dotsc,i_{n-1})\in\lambda\left(I,i_n\right)}
	f_{(i_1,\dotsc,i_{n-1},i_n)}
	N_{i_1}(X_1)\dotsm N_{i_{n-1}}(X_{n-1})
\end{equation}
for $i_n\in\rho(I,\vec{0}_{n-1})$. For $F\in\F_q[\vec{X}]$ and $i_n\in\N$, we define $F_{i_n}\in\F_q[X_1,\dotsc,X_{n-1}]$ to be the polynomial given by \eqref{eqn:coefficients} for $I=\N^n$ and $(f_{\vec{i}})_{\vec{i}\in\N^n}=F\dashv(N_{\vec{i}})_{\vec{i}\in\N^n}$. Then \eqref{eqn:collected_coefficients} and \eqref{eqn:coefficients} still hold whenever $F\in\F_q[\vec{X}]_I$ for some initial segment $I\subseteq\N^n$, but the definition of $F_{i_n}$ is now independent of $I$. 

We base our Hermite interpolation and evaluation algorithms on the following analogue of \cite[Proposition~1]{hoeven2013} for the functions $\evH({}\cdot{},\vec{j})$.

\begin{lemma}\label{lem:recursion} Let $I\subseteq\N^n$ be a finite initial segment and $F\in\F_q[\vec{X}]_I$. Then
\begin{equation*}
	\evH(F,\vec{j})
	=\evH\left(
		\sum_{i_n\in\rho\left(I,(j_1,\dotsc,j_{n-1})\right)}
		\evH\left(F_{i_n},(j_1,\dotsc,j_{n-1})\right)
		N_{i_n}(X_n)
	,j_n\right)
\end{equation*}
for all $\vec{j}=(j_1,\dotsc,j_n)\in I$.
\end{lemma}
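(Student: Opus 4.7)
The plan is to start from the explicit decomposition \eqref{eqn:collected_coefficients} of $F$ on the Newton basis, apply $\evH({}\cdot{},\vec{j})$ term by term using its $\F_q$-linearity, and then use the fact that the Hasse derivative of a product of polynomials in disjoint sets of variables factors. Concretely, for any $G\in\F_q[X_1,\dotsc,X_{n-1}]$ and $H\in\F_q[X_n]$, the shifted polynomial $G(\vec{X}'+\vec{Z}')H(X_n+Z_n)$ (with primed variables denoting the first $n-1$ coordinates) has its coefficient of $\vec{Z}^{\vec{t}}=\vec{Z}'^{\vec{t}'}Z_n^{t_n}$ equal to $H(G,\vec{t}')\cdot H(H,t_n)$, so after evaluating at $\vec{\alpha}_{\vec{j}\bmod q}$ and setting $\vec{t}=\vec{j}\bdiv q$ we obtain
\begin{equation*}
	\evH\left(G(X_1,\dotsc,X_{n-1})H(X_n),\vec{j}\right)
	=\evH\left(G,(j_1,\dotsc,j_{n-1})\right)\cdot\evH\left(H,j_n\right).
\end{equation*}
Applying this to each summand of $F=\sum_{i_n\in\rho(I,\vec{0}_{n-1})}F_{i_n}(X_1,\dotsc,X_{n-1})N_{i_n}(X_n)$ and then reassembling by linearity of $\evH({}\cdot{},j_n)$ in the $X_n$ variable yields an expression that is identical to the right-hand side of the lemma, except that the outer sum ranges over $\rho(I,\vec{0}_{n-1})$ rather than over $\rho(I,(j_1,\dotsc,j_{n-1}))$.

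The main step, and the only step that uses anything beyond bookkeeping, is showing that the extra summands vanish. I would argue as follows: suppose $i_n\in\rho(I,\vec{0}_{n-1})\setminus\rho(I,(j_1,\dotsc,j_{n-1}))$, so $(j_1,\dotsc,j_{n-1},i_n)\notin I$. Since $\vec{j}=(j_1,\dotsc,j_n)\in I$ and $I$ is an initial segment, we cannot have $i_n\leq j_n$, so $i_n>j_n$. By Lemma~\ref{lem:basis_property} applied in one variable (with the reduction $j_n=(j_n\bmod q)+(j_n\bdiv q)q$), the condition $i_n>j_n$ gives $\evH(N_{i_n}(X_n),j_n)=H(N_{i_n},j_n\bdiv q)(\alpha_{j_n\bmod q})=0$, so the corresponding term drops out of the sum. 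This reduces the range of summation to $\rho(I,(j_1,\dotsc,j_{n-1}))$, which is exactly the statement of the lemma.

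The only place where care is needed is the product rule for Hasse derivatives in disjoint variables; it is essentially immediate from the definition via the shift polynomial, but it is worth stating explicitly since the paper has only introduced single-polynomial Hasse derivatives up to this point. Everything else is a direct transcription of the corresponding argument of van der Hoeven and Schost, with $\evH$ replacing ordinary evaluation, and the vanishing condition $\vec{j}\not\geq\vec{i}$ playing the role that $\vec{j}\neq\vec{i}$ plays in the case without multiplicities.
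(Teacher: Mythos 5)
Your proof is correct and follows essentially the same route as the paper: both establish the factorisation $\evH(UV,\vec{j})=\evH(U,(j_1,\dotsc,j_{n-1}))\evH(V,j_n)$ for polynomials in disjoint variable sets, apply it to the decomposition~\eqref{eqn:collected_coefficients} using linearity, and then discard the extra terms by showing that $i_n>j_n$ (via the initial-segment property of $I$) forces $\evH(N_{i_n},j_n)=0$ by Lemma~\ref{lem:basis_property}. The only cosmetic difference is the order of presentation; the key ingredients and the vanishing argument are identical.
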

\begin{proof}We begin the proof by establishing a multiplicative property of the functions $\evH({}\cdot{},\vec{j})$. Let $U\in\F_q[X_1,\dotsc,X_{n-1}]$ and $V\in\F_q[X_n]$. Then it follows from the definition of the Hasse derivative that $H(UV,\vec{s})=H(U,(s_1,\dotsc,s_{n-1}))H(V,s_n)$ for $\vec{s}=(s_1,\dotsc,s_n)\in\N^n$. As evaluation is a homomorphism, we conclude that $\evH(UV,\vec{j})=\evH(U,(j_1,\dotsc,j_{n-1}))\evH(V,j_n)$ for $\vec{j}=(j_1,\dotsc,j_n)\in\N^n$.
	
Suppose now that $I\subseteq\N^n$ is a finite initial segment, $F\in\F_q[\vec{X}]_I$ and $\vec{j}=(j_1,\dotsc,j_n)\in I$. Then \eqref{eqn:collected_coefficients} holds, from which it follows that
\begin{equation*}
	\evH\left(F,\vec{j}\right)
	=\sum_{i_n\in\rho(I,\vec{0}_{n-1})}
	\evH\left(F_{i_n},(j_1,\dotsc,j_{n-1})\right)
	\evH\left(N_{i_n},j_n\right).
\end{equation*}
As $\evH\left({}\cdot{},j_n\right):\F_q[X_n]\rightarrow\F_q$ is a linear function, the proof of the lemma will be complete if we show that $\evH(N_{i_n},j_n)=0$ for $i_n\in\rho(I,\vec{0}_{n-1})\setminus\rho\left(I,(j_1,\dotsc,j_{n-1})\right)$. If $i_n\in\rho(I,\vec{0}_{n-1})$ and $i_n\leq j_n$, then $i_n\in\rho\left(I,(j_1,\dotsc,j_{n-1})\right)$ since $I$ is an initial segment and $(j_1,\dotsc,j_{n-1},i_n)\leq(j_1,\dotsc,j_{n-1},j_n)\in I$. As a result, $i_n>j_n$ for $i_n\in\rho(I,\vec{0}_{n-1})\setminus\rho\left(I,(j_1,\dotsc,j_{n-1})\right)$. Hence, Lemma~\ref{lem:basis_property} implies that $\evH(N_{i_n},j_n)=0$ for $i_n\in\rho(I,\vec{0}_{n-1})\setminus\rho\left(I,(j_1,\dotsc,j_{n-1})\right)$.
\end{proof}

Lemma~\ref{lem:recursion} sets up a natural recursive approach to the Hermite interpolation and evaluation problems by reducing each problem to a combination of univariate problems in the variable $X_n$, and the recovery or evaluation of the $(n-1)$-variate polynomials $F_{i_n}$. To allow us to instead present iterative algorithms, we must introduce some additional geometric operations on initial segments. For $n\geq 2$, $I\subseteq\N^n$ and $\ell\in\{1,\dotsc,n\}$, we define
\begin{equation*}
	\pi_\ell(I)
	=\left\{
		(i_1,\dotsc,i_{\ell-1},i_{\ell+1},\dotsc,i_n)
		\mid
		(i_1,\dotsc,i_n)\in I
	\right\}
\end{equation*}
to be the projection of $I$ onto the $(i_1,\dotsc,i_{\ell-1},i_{\ell+1},\dotsc,i_n)$-coordinate plane. For $\vec{i}=(i_1,\dotsc,i_{\ell-1},i_{\ell+1},\dotsc,i_n)\in\pi_\ell(I)$, we define
\begin{equation*}
	\mu_\ell(I,\vec{i})
	=\left\{
		i_\ell\in\N\mid (i_1,\dotsc,i_{\ell-1},i_\ell,i_{\ell+1},\dotsc,i_n)\in I
	\right\}.
\end{equation*}
We extend these definitions to $I\subseteq\N$ by defining $\pi_1(I)=\{0\}$ and $\mu_1(I,0)=I$. When $I\subseteq\N^n$ is an initial segment, so too are the sets $\mu_\ell(I,\vec{i})$ for $\vec{i}\in\pi_\ell(I)$.

The multivariate Hermite evaluation and interpolation algorithms are presented in Algorithms~\ref{alg:multi_hermite_eval} and~\ref{alg:multi_hermite_interp}, respectively. We require that the univariate algorithms they use to be in-place algorithms in the sense that inputs are overwritten by their corresponding output. In particular, the input and output specifications of the univariate algorithms should match those of their corresponding multivariate algorithm for $n=1$. However, we do not impose restrictions on the memory usage of the algorithms, as is usual when defining the notion of ``in-place'', so that any univariate algorithm can be modified to fit this description. We are deliberately non-committal about the choice of univariate algorithms, since any algorithms that solve the univariate problems may be used. One may, of course, take these algorithms to be the corresponding algorithm of Chin with the modifications described in Section~\ref{sec:univariate}, including basis conversion to ensure that input and output polynomials are written on the Newton basis. In particular, it is this combination of algorithms that is used to prove Theorem~\ref{thm:multivariate_hermite}.

\begin{algorithm}[!ht]
	\caption{Multivariate Hermite evaluation}\label{alg:multi_hermite_eval}
	\begin{algorithmic}[1]
		\Require A finite initial segment $I\subseteq\N^n$; and the vector $F\dashv(N_{\vec{i}})_{\vec{i}\in I}=(f_{\vec{i}})_{\vec{i}\in I}$ for some polynomial $F\in\F_q[\vec{X}]_I$.
		
		\Ensure The vector $(f_{\vec{i}})_{\vec{i}\in I}$ equal to $(\evH(F,\vec{j}))_{\vec{j}\in I}$.
		
		\For{$\ell=1,\dotsc,n$}
		
			\For{$\vec{k}=(j_1,\dotsc,j_{\ell-1},i_{\ell+1},\dotsc,i_n)\in\pi_\ell(I)$}
			
				\Statew{3}{Call an in-place univariate Hermite evaluation algorithm on the vector $\left(f_{(j_1,\dotsc,j_{\ell-1},i_\ell,i_{\ell+1},\dotsc,i_n)}\right)_{i_\ell\in\mu_\ell(I,\vec{k})}$.}\label{step:eval}
			
			\EndFor
		
		\EndFor
	\end{algorithmic}
\end{algorithm}

\begin{algorithm}[!ht]
	\caption{Multivariate Hermite interpolation}\label{alg:multi_hermite_interp}
	\begin{algorithmic}[1]
		\Require A finite initial segment $I\subseteq\N^n$; and the vector $(\evH(F,\vec{j}))_{\vec{j}\in I}=(f_{\vec{j}})_{\vec{j}\in I}$ for some polynomial $F\in\F_q[\vec{X}]_I$.
		
		\Ensure The vector $(f_{\vec{j}})_{\vec{j}\in I}$ equal to $F\dashv(N_{\vec{i}})_{\vec{i}\in I}$.
		
		\For{$\ell=n,n-1,\dotsc,1$}
		
			\For{$\vec{k}=(j_1,\dotsc,j_{\ell-1},i_{\ell+1},\dotsc,i_n)\in\pi_\ell(I)$}
			
				\Statew{3}{Call an in-place univariate Hermite interpolation algorithm on the vector $\left(f_{(j_1,\dotsc,j_{\ell-1},j_\ell,i_{\ell+1},\dotsc,i_n)}\right)_{j_\ell\in\mu_\ell(I,\vec{k})}$.}
			
			\EndFor
		
		\EndFor
	\end{algorithmic}
\end{algorithm}

We prove that Algorithm~\ref{alg:multi_hermite_eval} is correct in Lemma~\ref{lem:correctness}. Combining the lemma with Lemma~\ref{lem:existence_uniqueness} then establishes the correctness of Algorithm~\ref{alg:multi_hermite_interp}, since the algorithm simply reverses the steps of the Algorithm~\ref{alg:multi_hermite_eval}, inverting each evaluation along the way.

\begin{samepage}
\begin{lemma}\label{lem:correctness} Algorithm~\ref{alg:multi_hermite_eval} is correct.
\end{lemma}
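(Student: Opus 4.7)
The plan is to prove correctness by induction on the outer loop counter $\ell$. For each $\ell\in\{0,1,\dotsc,n\}$ and each tuple $(i_{\ell+1},\dotsc,i_n)$ that arises as the trailing $n-\ell$ coordinates of some element of $I$, define the partial Newton-basis polynomial
\begin{equation*}
F^{(\ell)}_{(i_{\ell+1},\dotsc,i_n)}(X_1,\dotsc,X_\ell)
=\sum f^{\mathrm{in}}_{(i_1,\dotsc,i_n)}N_{i_1}(X_1)\dotsm N_{i_\ell}(X_\ell),
\end{equation*}
where $(f^{\mathrm{in}}_{\vec{i}})_{\vec{i}\in I}$ denotes the input vector and the sum ranges over all $(i_1,\dotsc,i_\ell)$ with $(i_1,\dotsc,i_n)\in I$. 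The loop invariant I would maintain is: after iteration $\ell$ of the outer loop, the stored entry at position $(j_1,\dotsc,j_\ell,i_{\ell+1},\dotsc,i_n)\in I$ equals $\evH(F^{(\ell)}_{(i_{\ell+1},\dotsc,i_n)},(j_1,\dotsc,j_\ell))$. Setting $\ell=n$ gives $F^{(n)}=F$ and recovers the desired output $(\evH(F,\vec{j}))_{\vec{j}\in I}$.

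The base case $\ell=0$ holds trivially, since $F^{(0)}_{\vec{i}}$ is the constant $f^{\mathrm{in}}_{\vec{i}}$ and $\evH$ at the empty multi-index is the identity. For the inductive step, fix $\ell\geq 1$, assume the invariant holds after iteration $\ell-1$, and consider an outer index $\vec{k}=(j_1,\dotsc,j_{\ell-1},i_{\ell+1},\dotsc,i_n)\in\pi_\ell(I)$. The inductive hypothesis implies that the vector passed to the in-place univariate Hermite evaluation routine consists of the Newton coefficients in $X_\ell$ of the polynomial
\begin{equation*}
g(X_\ell)
=\sum_{i_\ell\in\mu_\ell(I,\vec{k})}
\evH\bigl(F^{(\ell-1)}_{(i_\ell,i_{\ell+1},\dotsc,i_n)},(j_1,\dotsc,j_{\ell-1})\bigr)
N_{i_\ell}(X_\ell),
\end{equation*}
so the routine overwrites the entry at $j_\ell$ with $\evH(g,j_\ell)$. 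To close the induction, I would expand $\evH(F^{(\ell)}_{(i_{\ell+1},\dotsc,i_n)},(j_1,\dotsc,j_\ell))$ via the telescoping identity $F^{(\ell)}_{(i_{\ell+1},\dotsc,i_n)}=\sum_{i_\ell}F^{(\ell-1)}_{(i_\ell,i_{\ell+1},\dotsc,i_n)}N_{i_\ell}(X_\ell)$, apply the multiplicative property of $\evH$ on products of polynomials in disjoint variables (established in the proof of Lemma~\ref{lem:recursion}), and compare term-by-term with $\evH(g,j_\ell)$.

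The main technical obstacle will be reconciling the summation index set $\mu_\ell(I,\vec{k})$ of the univariate routine with the a priori larger range of $i_\ell$ appearing in the telescoping expansion. This reconciliation has two parts: by Lemma~\ref{lem:basis_property}, the factor $\evH(N_{i_\ell},j_\ell)$ vanishes whenever $i_\ell>j_\ell$, so only indices with $i_\ell\leq j_\ell$ can contribute; and for any such $i_\ell$, the initial segment property applied to $(j_1,\dotsc,j_\ell,i_{\ell+1},\dotsc,i_n)\in I$ forces $(j_1,\dotsc,j_{\ell-1},i_\ell,i_{\ell+1},\dotsc,i_n)\in I$, hence $i_\ell\in\mu_\ell(I,\vec{k})$. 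With these two facts the two sums coincide, completing the inductive step and thus the proof.
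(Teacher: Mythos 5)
Your proof is correct, but it takes a genuinely different route from the paper's. The paper inducts on the number of variables $n$: it argues that the first $n-1$ outer iterations act independently on the slices indexed by $i_n$ and are equivalent to recursive calls of the algorithm on the lower-dimensional initial segments $\lambda(I,i_n)$, then invokes Lemma~\ref{lem:recursion} to handle the final iteration. You instead induct on the outer-loop counter $\ell$, maintaining an explicit loop invariant on the contents of the array in terms of the partial Newton polynomials $F^{(\ell)}$, and you do not invoke Lemma~\ref{lem:recursion} as a black box; rather, you redeploy its two underlying ingredients directly — the multiplicative property of $\evH$ across disjoint variable blocks and the vanishing $\evH(N_{i_\ell},j_\ell)=0$ for $i_\ell>j_\ell$ from Lemma~\ref{lem:basis_property} — together with the initial-segment closure property to reconcile the summation range with $\mu_\ell(I,\vec{k})$. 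The paper's approach matches the recursive structure of the original van der Hoeven--Schost algorithms and gets Lemma~\ref{lem:recursion} "for free" as an already-proved statement; your approach fits the iterative presentation of Algorithm~\ref{alg:multi_hermite_eval} more naturally and avoids the bookkeeping needed to argue that a block of outer iterations really does simulate a recursive call. Both are sound.
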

\begin{proof} We prove the lemma by induction on $n$. If $n=1$, then Algorithm~\ref{alg:multi_hermite_eval} simply calls the univariate algorithm on the input. Accordingly, correctness holds trivially for univariate inputs. It is illustrative to consider the case $n=2$ separately before proceeding by induction. Therefore, suppose that Algorithm~\ref{alg:multi_hermite_eval} is called on a finite initial segment $I\subseteq\N^2$ and the vector $F\dashv(N_{(i_1,i_2)})_{(i_1,i_2)\in I}=(f_{(i_1,i_2)})_{(i_1,i_2)\in I}$ for some $F\in\F_q[X_1,X_2]_I$. Then the first iteration of the outer loop of the algorithm calls the univariate algorithm on each of the vectors
\begin{equation*}
	\left(f_{(i_1,i_2)}\right)_{i_1\in\mu_1(I,i_2)}
	=\left(f_{(i_1,i_2)}\right)_{i_1\in\lambda(I,i_2)}
	\quad\text{for $i_2\in\pi_1(I)=\rho(I,0)$}.
\end{equation*}
Here, $(f_{(i_1,i_2)})_{i_1\in\lambda(I,i_2)}$ is equal to the coefficient vector $F_{i_2}\dashv(N_{i_1})_{i_1\in\lambda(I,i_2)}$. Therefore, after the first iteration of the outer loop has been performed, the input vector $(f_{(i_1,i_2)})_{(i_1,i_2)\in I}$ is equal to $(\evH(F_{i_2},j_1))_{(j_1,i_2)\in I}$. It follows that the second iteration of the outer loop calls the univariate algorithm on each of the vectors
\begin{equation*}
	\left(f_{(j_1,i_2)}\right)_{i_2\in\mu_2(I,j_1)}
	=\left(\evH(F_{i_2},j_1)\right)_{i_2\in\rho(I,j_1)}
	\quad\text{for $j_1\in\pi_2(I)=\lambda(I,0)$}.
\end{equation*}
Thus, Lemma~\ref{lem:recursion} implies that after the second iteration of the outer loop has been performed, we have $(f_{(j_1,i_2)})_{i_2\in\rho(I,j_1)}=(\evH(F,(j_1,j_2)))_{j_2\in\rho(I,j_1)}$ for $j_1\in\lambda(I,0)$. As this is the last iteration of the loop, it follows that the input vector is equal to $(E(F,(j_1,j_2)))_{(j_1,j_2)\in I}$ at the end of the algorithm. Hence, Algorithm~\ref{alg:multi_hermite_eval} is correct for the inputs $I$ and $F\dashv(N_{(i_1,i_2)})_{(i_1,i_2)\in I}$, and the lemma holds for $n=2$.

Suppose now that $n\geq 3$ and Algorithm~\ref{alg:multi_hermite_eval} is correct for all inputs on $n-1$ variables. Furthermore, suppose that Algorithm~\ref{alg:multi_hermite_eval} is called on a finite initial segment $I\subseteq\N^n$ and the vector $F\dashv(N_{\vec{i}})_{\vec{i}\in I}=(f_{\vec{i}})_{\vec{i}\in I}$ for some polynomial $F\in\F_q[\vec{X}]_I$. Then the subvectors $(f_{(i_1,\dotsc,i_{n-1},i_n)})_{(i_1,\dotsc,i_{n-1})\in\lambda(I,i_n)}$ for $i_n\in\rho(I,\vec{0}_{n-1})$ are modified independently of one another during the first $n-1$ iterations of the outer loop of the algorithm. Indeed, during the first $n-1$ iterations of the outer loop, the univariate algorithm is only ever called on a subvector of one of these subvectors. For $\ell\in\{1,\dotsc,n-1\}$, the family of sets
\begin{equation*}
	\left\{
		\left(j_1,\dotsc,j_{\ell-1},i_{\ell+1},\dotsc,i_{n-1},i_n\right)
		\mid
		\left(j_1,\dotsc,j_{\ell-1},i_{\ell+1},\dotsc,i_{n-1}\right)
		\in\pi_\ell\left(\lambda(I,i_n)\right)
	\right\}
\end{equation*}
for $i_n\in\rho(I,\vec{0}_{n-1})$ form a partition of $\pi_\ell(I)$. Moreover, for $i_n\in\rho(I,\vec{0}_{n-1})$ and $\vec{k}'=(j_1,\dotsc,j_{\ell-1},i_{\ell+1},\dotsc,i_{n-1})\in\pi_\ell\left(\lambda(I,i_n)\right)$, we have
\begin{equation*}
	\mu_\ell\left(
		I,
		\left(j_1,\dotsc,j_{\ell-1},i_{\ell+1},\dotsc,i_{n-1},i_n\right)
	\right)
	=\mu_\ell\left(
	\lambda\left(I,i_n\right),\vec{k}'\right).
\end{equation*}
Thus, performing the first $n-1$ iterations of the outer loop is equivalent to recursively calling the algorithm on the initial segment $\lambda(I,i_n)\subseteq\N^{n-1}$ and the subvector $(f_{(i_1,\dotsc,i_{n-1},i_n)})_{(i_1,\dotsc,i_{n-1})\in\lambda(I,i_n)}$ for each $i_n\in\rho(I,\vec{0}_{n-1})$. Initially, we have
\begin{equation*}
	\left(f_{(i_1,\dotsc,i_{n-1},i_n)}\right)_{(i_1,\dotsc,i_{n-1})\in\lambda(I,i_n)}
	=F_{i_n}\dashv
	\left(N_{(i_1,\dotsc,i_{n-1})}\right)_{(i_1,\dotsc,i_{n-1})\in\lambda(I,i_n)}
\end{equation*}
for $i_n\in\rho(I,\vec{0}_{n-1})$. Therefore, the induction hypothesis implies that after $n-1$ iterations of the outer loop have been performed, the input vector is equal to $(\evH(F_{i_n},(j_1,\dotsc,j_{n-1})))_{(j_1,\dotsc,j_{n-1},i_n)\in I}$. It follows that the last iteration of the outer loop calls the univariate algorithm on each of the vectors
\begin{equation*}
	\left(f_{(j_1,\dotsc,j_{n-1},i_n)}\right)_{i_n\in\mu_n(I,\vec{k})}
	=\left(
		\evH\left(F_{i_n},(j_1,\dotsc,j_{n-1})\right)
	\right)_{i_n\in\rho(I,(j_1,\dotsc,j_{n-1}))}
\end{equation*}
for $\vec{k}=(j_1,\dotsc,j_{n-1})\in\pi_n(I)=\lambda(I,0)$. Hence, Lemma~\ref{lem:recursion} implies that Algorithm~\ref{alg:multi_hermite_eval} returns the vector $(E(F,\vec{j}))_{\vec{j}\in I}$. That is, the algorithm is correct for the inputs $I$ and $F\dashv(N_{\vec{i}})_{\vec{i}\in I}$. Thus, the lemma follows by induction.
\end{proof}
\end{samepage}

It is clear that the complexity of each multivariate algorithm is determined by the complexity of the corresponding univariate algorithm. We capture the nature of this dependency in the next two lemmas.

\begin{lemma}\label{lem:multi_hermite_eval} Suppose that for some function $\mathsf{E}:\N\setminus\{0\}\rightarrow\RR$ the univariate Hermite evaluation algorithm used in Algorithm~\ref{alg:multi_hermite_eval} performs at most $\mathsf{E}(s)$ operations in $\F_q$ when given the initial segment $[s]$ as an input, and that $\mathsf{E}(s)/s$ is a nondecreasing function of $s$. Then given an input such that $I\subseteq[s_1]\times\dotsb\times[s_n]$ for positive integers $s_1,\dotsc,s_n$, Algorithm~\ref{alg:multi_hermite_eval} performs at most
\begin{equation*}
	\left(
		\frac{\mathsf{E}(s_1)}{s_1}
		+\dotsb+
		\frac{\mathsf{E}(s_n)}{s_n}
	\right)
	\left|I\right|
\end{equation*}
operations in $\F_q$.
\end{lemma}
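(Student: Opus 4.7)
The plan is to bound the total cost of Algorithm~\ref{alg:multi_hermite_eval} by analysing one iteration of the outer loop (indexed by $\ell$) at a time, and then summing over $\ell$. Fix $\ell\in\{1,\dotsc,n\}$. The work done in this iteration consists of one call to the univariate Hermite evaluation algorithm on the vector indexed by $\mu_\ell(I,\vec{k})$ for each $\vec{k}\in\pi_\ell(I)$. Since $I\subseteq[s_1]\times\dotsb\times[s_n]$ and $\mu_\ell(I,\vec{k})$ is an initial segment of $\N$ (as noted at the end of Section~\ref{sec:multivariate}), it has the form $[t_{\vec{k}}]$ for some positive integer $t_{\vec{k}}\leq s_\ell$. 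The cost of the corresponding univariate call is therefore at most $\mathsf{E}(t_{\vec{k}})$ operations in $\F_q$.

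I would then leverage the hypothesis that $\mathsf{E}(s)/s$ is nondecreasing in $s$ to replace each $\mathsf{E}(t_{\vec{k}})$ with the uniform bound
\begin{equation*}
	\mathsf{E}(t_{\vec{k}})
	=\frac{\mathsf{E}(t_{\vec{k}})}{t_{\vec{k}}}\,t_{\vec{k}}
	\leq\frac{\mathsf{E}(s_\ell)}{s_\ell}\,t_{\vec{k}}
	=\frac{\mathsf{E}(s_\ell)}{s_\ell}\left|\mu_\ell(I,\vec{k})\right|.
\end{equation*}
Summing this inequality over $\vec{k}\in\pi_\ell(I)$ and using the fact that the sets $\{\vec{k}\}\times\mu_\ell(I,\vec{k})$ (appropriately interpreted as subsets of $I$ by inserting the $\mu_\ell$-index in position~$\ell$) partition $I$, so that $\sum_{\vec{k}\in\pi_\ell(I)}\left|\mu_\ell(I,\vec{k})\right|=\left|I\right|$, the total cost of the $\ell$th iteration of the outer loop is bounded by $(\mathsf{E}(s_\ell)/s_\ell)\left|I\right|$ operations in $\F_q$.

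Summing these bounds over $\ell=1,\dotsc,n$ yields the claim. The only genuinely nontrivial ingredients are the observation that $\mu_\ell(I,\vec{k})$ is an initial segment of $\N$ (hence of the form $[t_{\vec{k}}]$ with $t_{\vec{k}}\leq s_\ell$), the monotonicity hypothesis on $\mathsf{E}(s)/s$, and the partition identity $\sum_{\vec{k}\in\pi_\ell(I)}\left|\mu_\ell(I,\vec{k})\right|=\left|I\right|$; each of these is either immediate from the definitions or hypothesised. I do not anticipate a substantive obstacle: the heart of the argument is simply telescoping a fibre-wise count against the monotonicity bound.
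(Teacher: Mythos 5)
Your proof is correct and follows essentially the same route as the paper's: bound each univariate call by $\mathsf{E}(|\mu_\ell(I,\vec{k})|)$, apply the monotonicity of $\mathsf{E}(s)/s$ together with $\mu_\ell(I,\vec{k})\subseteq[s_\ell]$, then use the partition identity $\sum_{\vec{k}\in\pi_\ell(I)}|\mu_\ell(I,\vec{k})|=|I|$ and sum over $\ell$. Your explicit remark that $\mu_\ell(I,\vec{k})$ is an initial segment of $\N$, hence of the form $[t_{\vec{k}}]$, is a welcome clarification that the paper leaves implicit (it is needed so that the hypothesis on $\mathsf{E}$ actually applies to these calls).
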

\begin{proof} If the univariate Hermite evaluation algorithm has complexity given by such a function $\mathsf{E}:\N\rightarrow\RR$, then Algorithm~\ref{alg:multi_hermite_eval} performs at most
\begin{equation*}
	\sum^n_{\ell=1}
	\sum_{\vec{k}\in\pi_\ell(I)}\mathsf{E}(\left|\mu_\ell(I,\vec{k})\right|)
	=\sum^n_{\ell=1}
	\sum_{\vec{k}\in\pi_\ell(I)}
	\frac{\mathsf{E}(\left|\mu_\ell(I,\vec{k})\right|)}
	{\left|\mu_\ell(I,\vec{k})\right|}
	\left|\mu_\ell(I,\vec{k})\right|
\end{equation*}
operations in $\F_q$. It follows that if $I\subseteq[s_1]\times\dotsb\times[s_n]$ for positive integers $s_1,\dotsc,s_n$, and thus $\mu_\ell(I,\vec{k})\subseteq[s_\ell]$ for $\ell\in\{1,\dotsc,n\}$ and $\vec{k}\in\pi_\ell(I)$, then the algorithm performs at most
\begin{equation*}
	\sum^n_{\ell=1}
	\frac{\mathsf{E}(s_\ell)}{s_\ell}
	\sum_{\vec{k}\in\pi_\ell(I)}\left|\mu_\ell(I,\vec{k})\right|
	=\sum^n_{\ell=1}\frac{\mathsf{E}(s_\ell)}{s_\ell}\left|I\right|
\end{equation*}
operations in $\F_q$.
\end{proof}

\begin{lemma}\label{lem:multi_hermite_interp} Suppose that for some function $\mathsf{I}:\N\setminus\{0\}\rightarrow\RR$ the univariate Hermite interpolation algorithm used in Algorithm~\ref{alg:multi_hermite_interp} performs at most $\mathsf{I}(s)$ operations in $\F_q$ when given the initial segment $[s]$ as an input, and that $\mathsf{I}(s)/s$ is a nondecreasing function of $s$. Then given an input such that $I\subseteq [s_1]\times\dotsb\times[s_n]$ for positive integers $s_1,\dotsc,s_n$, Algorithm~\ref{alg:multi_hermite_interp} performs at most
\begin{equation*}
	\left(
		\frac{\mathsf{I}(s_1)}{s_1}
		+\dotsb+
		\frac{\mathsf{I}(s_n)}{s_n}
	\right)
	\left|I\right|
\end{equation*}
operations in $\F_q$.
\end{lemma}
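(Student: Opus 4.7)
The plan is to follow the argument of Lemma~\ref{lem:multi_hermite_eval} essentially verbatim, since Algorithm~\ref{alg:multi_hermite_interp} has the same control structure as Algorithm~\ref{alg:multi_hermite_eval} — the only differences being that the outer loop is traversed in reverse order ($\ell=n,n-1,\dotsc,1$ instead of $\ell=1,\dotsc,n$) and the univariate call is an interpolation rather than an evaluation. Neither difference affects the total operation count, which depends only on how many univariate calls are made and on the lengths of the vectors passed to them.

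First I would observe that, for each $\ell\in\{1,\dotsc,n\}$ and each $\vec{k}\in\pi_\ell(I)$, Algorithm~\ref{alg:multi_hermite_interp} invokes the univariate Hermite interpolation algorithm exactly once, on a vector indexed by $\mu_\ell(I,\vec{k})$. Since $I$ is a finite initial segment, so is $\mu_\ell(I,\vec{k})$, and hence it is of the form $[|\mu_\ell(I,\vec{k})|]$. By hypothesis, this call costs at most $\mathsf{I}(|\mu_\ell(I,\vec{k})|)$ field operations. Thus the total number of operations is at most
\begin{equation*}
	\sum^n_{\ell=1}\sum_{\vec{k}\in\pi_\ell(I)}
	\mathsf{I}\!\left(\left|\mu_\ell(I,\vec{k})\right|\right)
	=\sum^n_{\ell=1}\sum_{\vec{k}\in\pi_\ell(I)}
	\frac{\mathsf{I}\!\left(\left|\mu_\ell(I,\vec{k})\right|\right)}
	{\left|\mu_\ell(I,\vec{k})\right|}
	\left|\mu_\ell(I,\vec{k})\right|.
\end{equation*}

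Next, using the containment $I\subseteq[s_1]\times\dotsb\times[s_n]$, which implies $\mu_\ell(I,\vec{k})\subseteq[s_\ell]$ for every $\ell$ and every $\vec{k}\in\pi_\ell(I)$, together with the assumption that $\mathsf{I}(s)/s$ is nondecreasing, I would bound each ratio in the sum above by $\mathsf{I}(s_\ell)/s_\ell$. Finally, I would invoke the partition identity $\sum_{\vec{k}\in\pi_\ell(I)}|\mu_\ell(I,\vec{k})|=|I|$, which holds because each $\vec{i}\in I$ has a unique image $\vec{k}\in\pi_\ell(I)$ under deletion of the $\ell$th coordinate, with the deleted coordinate lying in $\mu_\ell(I,\vec{k})$. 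Combining these yields the claimed bound.

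There is no real obstacle: the argument is a direct transcription of the proof of Lemma~\ref{lem:multi_hermite_eval}, with $\mathsf{E}$ replaced by $\mathsf{I}$. The only point worth checking is that the order of the outer loop is irrelevant for counting purposes, which is immediate from the fact that the set of univariate calls made, and the lengths of the vectors they operate on, are the same regardless of the direction of iteration.
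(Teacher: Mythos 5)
Your proof is correct and matches the paper's intent exactly: the paper explicitly omits the proof of this lemma, stating that it uses identical arguments to those of Lemma~\ref{lem:multi_hermite_eval}, which is precisely the transcription you carry out (with $\mathsf{E}$ replaced by $\mathsf{I}$ and the observation that loop direction is irrelevant to the operation count).
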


We omit the proof of Lemma~\ref{lem:multi_hermite_interp} since it uses identical arguments to those of Lemma~\ref{lem:multi_hermite_eval}. Combining the two lemmas with Lemmas~\ref{lem:univariate_hermite} and~\ref{lem:univariate_conversion} then completes the proof of Theorem~\ref{thm:multivariate_hermite}. We note that Lemmas~\ref{lem:multi_hermite_eval} and~\ref{lem:multi_hermite_interp}, and thus Theorem~\ref{thm:multivariate_hermite}, do not account for the cost of computing the sets $\pi_\ell(I)$ and $\mu_\ell(I,\vec{k})$ during the algorithm. For the initial segments that are used in the encoding algorithms, we have simple explicit formulae that allow the sets to be computed with low complexity. The general problem is not considered here.

As for the complexities of the multivariate algorithms, their space requirements are largely determined by those of the univariate algorithms. The amount of auxiliary space used by either multivariate algorithm, i.e., storage in addition to the input array, is equal to that of the index manipulations plus the maximum amount of auxiliary space used by the corresponding univariate algorithm over all calls to it. Therefore, if the univariate algorithm is a true in-place algorithm, in the sense that it uses only $\bigO(1)$ auxiliary space, and the index manipulations also require only $\bigO(1)$ auxiliary space, then the multivariate algorithm enjoys the same auxiliary space bound.

\section{Encoding algorithm for low-rate codes}\label{sec:low_rate}

In this section, we present the first of our fast systematic encoding algorithms for multiplicity codes. Although, the algorithm is suitable for multiplicity codes of all rates, we somewhat falsely refer to it as an encoding algorithm for low-rate codes since the encoding algorithm of Section~\ref{sec:high_rate} is faster for codes with sufficiently high rates. Recall that our goal is to efficiently evaluate the encoding function $\mathrm{enc}^s_d$ defined in Section~\ref{sec:encoding}. The algorithm of this section achieves this goal by using the fast Hermite interpolation and evaluation algorithms of Section~\ref{sec:hermite} to successively evaluate its constituent maps $\eval^{-1}_{\Ical_{d,n}}$ and $\eval^s_d$.

We use the map $\kappa_n:[q]^n\times\N^n\rightarrow\N^n$, given by $(\vec{i},\vec{s})\mapsto\vec{i}+\vec{s}q$, to translate the encoding problem into the language of Section~\ref{sec:hermite}. To this end, we let $\I_{d,n}$ denote the $\kappa_n$-image of the information set $\Ical_{d,n}$ defined in Theorem~\ref{thm:info_set}. Then we have
\begin{equation*}
	\I_{d,n}
	=\{\vec{i}\in\N^n\mid\left|\vec{i}\right|\leq d\}
	\quad\text{for $d\in\N$}.
\end{equation*}
For notational convenience, we extend this definition to $d\in\Z$, by defining $\I_{d,n}$ to be the empty set for $d<0$. For nonzero $s\in\N$, we define $\C_{s,n}=\kappa_n([q]^n\times S_{s,n})$. Finally, for $d,s\in\Z$ such that $d<sq$ and $s>0$, we define $\R_{d,s,n}=\C_{s,n}\setminus\I_{d,n}$. With this notation, a message of a multiplicity code $\Mult^s_d$ is written as a vector $m=(m_{\vec{j}})_{\vec{j}\in\I_{d,n}}$. Its systematic encoding is then equal to
\begin{equation*}
	\mathrm{enc}^s_d(m)
	=\left(
	\left(\evH(F,\vec{j}+\vec{t}q)\right)_{\vec{t}\in S_{s,n}}
	\right)_{\vec{j}\in[q]^n}
\end{equation*}
where $F$ is the unique polynomial in $\F_q[\vec{X}]_d$ such that $\evH(F,\vec{j})=m_{\vec{j}}$ for $\vec{j}\in\I_{d,n}$. It follows that it is sufficient to consider the problem of computing the vector $(\evH(F,\vec{j}))_{\vec{j}\in\C_{s,n}}$ when given $m$. In fact, we need only compute $(\evH(F,\vec{j}))_{\vec{j}\in\R_{d,s,n}}$ since the remaining entries are present in the message $m$ to begin with.

For $d,s\in\N$ such that $d<sq$, we have $\F_q[\vec{X}]_d=\F_q[\vec{X}]_{I_{d,n}}\subset\F_q[\vec{X}]_{C_{s,n}}$. Therefore, as noted in Section~\ref{sec:hermite_definition}, computing the polynomial $F$ that corresponds to a message $m$ of the multiplicity code $\Mult^s_d$ (i.e., computing $\eval^{-1}_{\Ical_{d,n}}(m)$) is an instance of the Hermite interpolation problem with initial segment $I=\I_{d,n}$, while computing the vector $(\evH(F,\vec{j}))_{\vec{j}\in\C_{s,n}}$ (i.e., computing \emph{the entries of} $\eval^s_d(F)$) is an instance of Hermite evaluation problem with initial segment $I=\C_{s,n}$. Applying the algorithms of Section~\ref{sec:hermite} to these instances of the interpolation and evaluation problems yields our first systematic encoding algorithm, presented in Algorithm~\ref{alg:low_rate}, and the complexity bound of Theorem~\ref{thm:encoders}.

\begin{algorithm}[!ht]
	\caption{Systematic encoding for low-rate codes}\label{alg:low_rate}
	\begin{algorithmic}[1]
		\Require Nonnegative integers $d$ and $s$ such that $d<sq$; and a vector $(c_{\vec{j}})_{\vec{j}\in\C_{s,n}}$ such that $m=(c_{\vec{j}})_{\vec{j}\in\I_{d,n}}$ is a message of $\Mult^s_d$, and $c_{\vec{j}}=0$ for $\vec{j}\in\R_{d,s,n}$.
		
		\Ensure The vector $(c_{\vec{j}})_{\vec{j}\in\C_{s,n}}$ equal to $(\evH(F,\vec{j}))_{\vec{j}\in\C_{s,n}}$ for the polynomial $F\in\F_q[\vec{X}]_d$ that corresponds to the message $m$.
		
		\State Call Algorithm~\ref{alg:multi_hermite_interp} on $\I_{d,n}$ and $(c_{\vec{j}})_{\vec{j}\in\I_{d,n}}$.
		
		\State Call Algorithm~\ref{alg:multi_hermite_eval} on $\C_{s,n}$ and $(c_{\vec{j}})_{\vec{j}\in\C_{s,n}}$.
	\end{algorithmic}
\end{algorithm}

\begin{theorem}\label{thm:encoders} Given a message vector $m$ of a multiplicity code $\Mult^s_d$, its systematic encoding $\mathrm{enc}^s_d(m)$ can be computed in
\begin{equation*}
	\bigO\left(
		\frac{\mult(sq)\log sq}{sq}\left|\C_{s,n}\right|n
	\right)
\end{equation*}
operations in $\F_q$.
\end{theorem}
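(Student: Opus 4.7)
The plan is to verify the correctness of Algorithm~\ref{alg:low_rate} and then bound its cost by applying Theorem~\ref{thm:multivariate_hermite} to each of its two calls. Correctness is essentially already assembled: as observed just before the algorithm, for $\vec{i} \in \I_{d,n}$ we have $m_{\vec{i}} = \evH(F, \vec{i})$, where $F = \eval^{-1}_{\Ical_{d,n}}(m) \in \F_q[\vec{X}]_d = \F_q[\vec{X}]_{\I_{d,n}}$, so the first call, which is an instance of the Hermite interpolation problem on the initial segment $\I_{d,n}$, recovers the Newton-basis coefficient vector $F \dashv (N_{\vec{i}})_{\vec{i} \in \I_{d,n}}$. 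Since $\I_{d,n} \subseteq \C_{s,n}$ and $F \in \F_q[\vec{X}]_{\C_{s,n}}$, the zero-padded vector on $\R_{d,s,n}$ combined with the interpolation output is exactly $F \dashv (N_{\vec{i}})_{\vec{i} \in \C_{s,n}}$, so the second call, an instance of the Hermite evaluation problem on $\C_{s,n}$, produces $(\evH(F, \vec{j}))_{\vec{j} \in \C_{s,n}}$. The output contains every entry of $\eval^s_d(F) = \mathrm{enc}^s_d(m)$.

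For the complexity, the key step is to choose the dimension bounds $s_1, \dots, s_n$ needed to invoke Theorem~\ref{thm:multivariate_hermite}. First, since $d < sq$, every $\vec{i} \in \I_{d,n}$ has each coordinate at most $d < sq$, so $\I_{d,n} \subseteq [sq]^n$. Second, every element of $\C_{s,n}$ is of the form $\vec{j} + \vec{t}q$ with $\vec{j} \in [q]^n$ and $|\vec{t}| < s$, so each coordinate is at most $(q-1) + (s-1)q = sq - 1$, giving $\C_{s,n} \subseteq [sq]^n$. Taking $s_1 = \dots = s_n = sq$, Theorem~\ref{thm:multivariate_hermite} bounds the cost of each of the two calls by
\begin{equation*}
	\bigO\!\left( n \cdot \frac{\mult(sq)\log(sq)}{sq} \cdot |I| \right),
\end{equation*}
with $|I|$ equal to $|\I_{d,n}|$ for the interpolation step and $|\C_{s,n}|$ for the evaluation step. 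Since $\I_{d,n} \subseteq \C_{s,n}$, both are bounded by the evaluation bound, and summing yields the claimed total.

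There is no real obstacle here beyond the dimension bookkeeping above; the hard work is contained in Theorem~\ref{thm:multivariate_hermite}. The only minor point to mention is that preparing the input to Algorithm~\ref{alg:low_rate} (namely, embedding $m$ into a vector indexed by $\C_{s,n}$ with zeros on $\R_{d,s,n}$) costs $\bigO(|\C_{s,n}|)$ operations, which is absorbed into the stated bound. Similarly, the cost of enumerating the auxiliary sets $\pi_\ell(I)$ and $\mu_\ell(I, \vec{k})$ for $I \in \{\I_{d,n}, \C_{s,n}\}$ is negligible because both initial segments admit simple explicit descriptions in terms of coordinate weights, as flagged in the remark following Lemma~\ref{lem:multi_hermite_interp}.
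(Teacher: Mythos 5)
Your proposal is correct and follows essentially the same route as the paper: apply Theorem~\ref{thm:multivariate_hermite} to the two calls in Algorithm~\ref{alg:low_rate} and observe that the evaluation step dominates because $\I_{d,n}\subset\C_{s,n}$. The only cosmetic difference is that the paper instantiates the interpolation step with $s_1=\dotsb=s_n=d+1$ (the tightest box containing $\I_{d,n}$) and then uses monotonicity of $\mult(k)/k$, whereas you use $sq$ for both calls directly; both reach the same conclusion immediately.
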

\begin{proof} Taking the univariate algorithms used by Algorithms~\ref{alg:multi_hermite_eval} and~\ref{alg:multi_hermite_interp} to be the corresponding algorithms of Chin, as modified in Section~\ref{sec:univariate}, Theorem~\ref{thm:multivariate_hermite} implies that Algorithm~\ref{alg:low_rate} performs
\begin{equation}\label{eqn:low_rate}
	\bigO\left(
		\frac{\mult(d+1)\log(d+1)}{d+1}\left|\I_{d,n}\right|n
		+\frac{\mult(sq)\log sq}{sq}\left|\C_{s,n}\right|n
	\right)
\end{equation}
operations in $\F_q$. As the parameters $d$ and $s$ satisfy the inequality $d<sq$, and thus $\I_{d,n}\subset\C_{s,n}$, the second term of the bound dominates.
\end{proof}

By taking $\mult(k)$ to be in $\bigO(k\log k\log\log k)$, it follows from Theorem~\ref{thm:encoders} that systematic encoding for $\Mult^s_d$ can be performed in $\bigO(\left|\C_{s,n}\right|n\log^2(sq)\log\log(sq))$ operations in $\F_q$, matching the quasi-linear bound stated in the introduction. While we have only bounded the number of field operations performed by the encoding algorithm, the cost of the index manipulations performed by Algorithms~\ref{alg:multi_hermite_eval} and~\ref{alg:multi_hermite_interp} during encoding is low in practice. Indeed, for $\ell\in\{1,\dotsc,n\}$, we have the simple explicit formulae $\pi_\ell(\I_{d,n})=\I_{d,n-1}$ and $\mu_\ell(\I_{d,n},\vec{k})=\I_{d-\left|\vec{k}\right|,1}$ for $\vec{k}\in\I_{d,n-1}$. Similarly, $\pi_\ell(\C_{s,n})=\C_{s,n-1}$ and $\mu_\ell(\C_{s,n},\vec{k})=\C_{s-\left|\vec{k}\bdiv{q}\right|,1}$ for $\vec{k}\in\C_{s,n-1}$.

While the encoding algorithm has quasi-optimal asymptotic complexity, it is clear that it performs more operations than is necessary. This excess is most apparent in the evaluation step of the algorithm, which recomputes the entries of original input message. We address this problem in the next section by describing some modifications to the encoding algorithm that can be used to eliminate unnecessary computations. Each modification requires modifications to be made to one or both of the univariate algorithms. As we are being non-committal about our choice of these algorithms, we only describe how the behaviour of univariate algorithms should be changed, rather than describing how to obtain the desired behaviour.

\subsection{Practical improvements}\label{sec:improvements}

Our first modification occurs at the interface of Algorithms~\ref{alg:multi_hermite_eval} and~\ref{alg:multi_hermite_interp}. Suppose that, as occurs for the algorithms of Section~\ref{sec:univariate}, the univariate interpolation algorithm performs monomial to Newton basis conversion as its last step, and the univariate evaluation algorithm performs the inverse conversion as its first step. Then the conversions performed during the last iteration of the interpolation algorithm cancel with those performed during the first iteration of the evaluation algorithm. Consequently, these basis conversions can be avoided altogether, saving $\Omega(\left|\I_{d,n}\right|)$ operations.

For our second improvement, we modify the evaluation step of the encoding algorithm to take advantage of the fact that the polynomial being evaluated has support contained in $\I_{d,n}$, a proper, and possibly much smaller, subset of the initial segment $I=\C_{s,n}$ for which we apply Algorithm~\ref{alg:multi_hermite_eval}. In the univariate case, we need only modify the algorithm to take advantage of the fact that the polynomial has degree at most $d$ rather than at most $sq-1$. And it is straightforward to modify the algorithm of Section~\ref{sec:univariate} accordingly. The following lemma allows us to extend the modified univariate algorithm to the multivariate case.

\begin{lemma}\label{lem:low_rate_zero_entries} If the inputs of Algorithm~\ref{alg:multi_hermite_eval} satisfy $I\supset\I_{d,n}$ and $f_{\vec{i}}=0$ for $\vec{i}\in I\setminus\I_{d,n}$, for some $d\in\N$, then at the beginning of the $\ell$th iteration of the outer loop of the algorithm, for $\ell\in\{1,\dotsc,n\}$, we have $f_{\vec{i}}=0$ for all $\vec{i}=(i_1,\dotsc,i_n)\in I$ such that $i_\ell+\dotsb+i_n>d$.
\end{lemma}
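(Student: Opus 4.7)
The plan is to prove the lemma by induction on $\ell$ from $1$ up to $n$. The key observation is that the univariate Hermite evaluation algorithm is $\F_q$-linear and sends the zero coefficient vector to the zero evaluation vector; hence if an entire subvector passed to it consists of zeros, that subvector is left unchanged.

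For the base case $\ell=1$, the condition $i_1+\dotsb+i_n>d$ is precisely the statement that $\vec{i}\notin\I_{d,n}$, so the claim is exactly the hypothesis on the input.

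For the inductive step, I would assume the claim holds at the beginning of the $\ell$th iteration of the outer loop and deduce it for iteration $\ell+1$. In the $\ell$th iteration the algorithm calls the univariate Hermite evaluation on each vector
\begin{equation*}
	\left(f_{(j_1,\dotsc,j_{\ell-1},i_\ell,i_{\ell+1},\dotsc,i_n)}\right)_{i_\ell\in\mu_\ell(I,\vec{k})}
	\quad\text{for }\vec{k}=(j_1,\dotsc,j_{\ell-1},i_{\ell+1},\dotsc,i_n)\in\pi_\ell(I).
\end{equation*}
Fix such a $\vec{k}$ with $i_{\ell+1}+\dotsb+i_n>d$. Since every $i_\ell\in\mu_\ell(I,\vec{k})$ satisfies $i_\ell\geq 0$, the inductive hypothesis (applied coordinate-wise via $i_\ell+i_{\ell+1}+\dotsb+i_n\geq i_{\ell+1}+\dotsb+i_n>d$) forces the entire subvector to be zero before the univariate call. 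By linearity the call leaves it at zero, so every entry $f_{\vec{i}}$ with $i_{\ell+1}+\dotsb+i_n>d$ remains zero at the start of iteration $\ell+1$. For $\vec{k}$ with $i_{\ell+1}+\dotsb+i_n\leq d$ nothing needs to be shown, since such indices do not appear in the target condition at step $\ell+1$.

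There is no substantive obstacle: the only point that warrants a sentence is the remark that the in-place univariate Hermite evaluation algorithm, being the matrix--vector form of an $\F_q$-linear bijection between coefficient and value representations, necessarily fixes the zero vector. The rest is bookkeeping on the indices $\vec{k}\in\pi_\ell(I)$ and the subvectors $\mu_\ell(I,\vec{k})$, together with the trivial arithmetic inequality $i_\ell+i_{\ell+1}+\dotsb+i_n\geq i_{\ell+1}+\dotsb+i_n$.
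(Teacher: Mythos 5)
Your argument is correct and follows the same route as the paper's: induction on $\ell$, with the base case being precisely the input hypothesis, and the inductive step observing that any subvector indexed by $\vec{k}$ with $i_{\ell+1}+\dotsb+i_n>d$ is all zeros by the induction hypothesis (since $i_\ell\geq 0$) and is therefore left unchanged by the correct in-place univariate algorithm. The paper reaches the same conclusion by invoking the partition of $I$ by the sets $\{(j_1,\dotsc,j_{\ell-1},i_\ell,i_{\ell+1},\dotsc,i_n)\mid i_\ell\in\mu_\ell(I,\vec{k})\}$ rather than phrasing the preservation via $\F_q$-linearity, but the substance is identical.
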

\begin{proof} We prove the lemma by induction on $\ell$. The statement holds trivially for the first iteration. Therefore, suppose that at the beginning of the $\ell$th iteration of the outer loop, for some $1\leq\ell<n$, we have $f_{\vec{i}}=0$ for all $\vec{i}=(i_1,\dotsc,i_n)\in I$ such that $i_\ell+\dotsb+i_n>d$. Then for all $\vec{k}=(j_1,\dotsc,j_{\ell-1},i_{\ell+1},\dotsc,i_n)\in\pi_\ell(I)$ such that $i_{\ell+1}+\dotsb+i_n>d$, the subvector $(f_{(j_1,\dotsc,j_{\ell-1},i_\ell,i_{\ell+1},\dotsc,i_n)})_{i_\ell\in\mu_\ell(I,\vec{j})}$ contains all zeros and is consequently unchanged by the call to the univariate algorithm. As the sets
\begin{equation*}
	\left\{
		(j_1,\dotsc,j_{\ell-1},i_\ell,i_{\ell+1},\dotsc,i_n)
		\mid
		i_\ell\in\mu_\ell(I,\vec{k})
	\right\}
\end{equation*}
for $\vec{k}=(j_1,\dotsc,j_{\ell-1},i_{\ell+1},\dotsc,i_n)\in\pi_\ell(I)$ form a partition of $I$, it follows that at the beginning of the next iteration we have $f_{\vec{i}}=0$ for all $\vec{i}=(i_1,\dotsc,i_n)\in I$ such that $i_{\ell+1}+\dotsb+i_n>d$.
\end{proof}

It follows from Lemma~\ref{lem:low_rate_zero_entries} that during the evaluation step of the encoding algorithm, the inner loop of Algorithm~\ref{alg:multi_hermite_eval} need only be performed for $\vec{k}$ such that $i_{\ell+1}+\dotsb+i_n\leq d$. Moreover, for each such $\vec{k}$, the univariate algorithm evaluates a polynomial of degree at most $d-i_{\ell+1}-\dotsb-i_n$. Consequently, it is straightforward to extend the modified univariate algorithm to the multivariate case. To give some indication of the number of operations saved by this modification, we observe that the number of zero entries described by Lemma~\ref{lem:low_rate_zero_entries} over all iterations is equal to
\begin{equation*}
	\left|\R_{d,s,n}\right|
	+\sum^n_{\ell=2}
	\sum_{\vec{i}\in\R_{d,s,n-\ell+1}}
	\left|\C_{s-\left|\vec{i}\bdiv{q}\right|,\ell-1}\right|.
\end{equation*}
Hence, the modification saves the most operations when the rate $\left|\I_{d,n}\right|/\left|\C_{s,n}\right|$ of the code is low.

For the final modification, we stop the evaluation step of the encoding algorithm from recomputing the input message, saving $\Omega(\left|\I_{d,n}\right|)$ operations. These entries of the input and output are indexed by the information set $\I_{d,n}$. Consequently, during the last iteration of the outer loop of Algorithm~\ref{alg:multi_hermite_eval}, where we call the univariate algorithm on $(f_{(j_1,\dotsc,j_{n-1},i_n)})_{i_n\in\C_{s-\left|\vec{k}\bdiv{q}\right|,n}}$ for each $\vec{k}=(j_1,\dotsc,j_{n-1})\in\C_{s,n-1}$, the univariate algorithm need only return correct values in those entries with $i_n>d-j_1-\dotsb-j_{n-1}$. The entries indexed by $\R_{d,s,n}$ will then be correct at the end of the algorithm, while the remaining entries will contain meaningless values. Therefore, if the modification can be implement for the univariate case, then it readily extends to the multivariate case.

\section{Encoding algorithm for high-rate codes}\label{sec:high_rate}

Let $m=(m_{\vec{j}})_{\vec{j}\in\I_{d,n}}$ be a message vector of a multiplicity code $\Mult^s_d$. Then Lemma~\ref{lem:existence_uniqueness} implies that there exists a unique polynomial $F_\C\in\F_q[\vec{X}]_{\C_{s,n}}$ such that
\begin{equation*}
	\evH(F_\C,\vec{j})
	=\begin{cases}
		m_{\vec{j}} & \text{if $\vec{j}\in\I_{d,n}$},   \\
		0           & \text{if $\vec{j}\in\R_{d,s,n}$}.
	\end{cases}
\end{equation*}
Let $F_\C\dashv(N_{\vec{i}})_{\vec{i}\in\C_{s,n}}=(f_{\vec{i}})_{\vec{i}\in\C_{s,n}}$, and define polynomials $F_\I=\sum_{\vec{i}\in\I_{d,n}}f_{\vec{i}}N_{\vec{i}}$ and $F_\R=-\sum_{\vec{i}\in\R_{d,s,n}}f_{\vec{i}}N_{\vec{i}}$. Then $F_\C=F_\I-F_\R$. As $\I_{d,n}$ is an initial segment that is disjoint with $\R_{d,s,n}$, we have $\vec{j}\ngeq\vec{i}$ for $\vec{j}\in\I_{d,n}$ and $\vec{i}\in\R_{d,s,n}$. Thus, Lemma~\ref{lem:basis_property} implies that
\begin{equation*}
	\evH(F_\I,\vec{j})
	=\evH(F_\C,\vec{j})+\evH(F_\R,\vec{j})
	=\begin{cases}
		m_{\vec{j}}        & \text{if $\vec{j}\in\I_{d,n}$},   \\
		\evH(F_\R,\vec{j}) & \text{if $\vec{j}\in\R_{d,s,n}$}.
	\end{cases}
\end{equation*}
It follows that $F_\I\in\F_q[\vec{X}]_{\I_{d,n}}=\F_q[\vec{X}]_d$ is the polynomial that corresponds to the message~$m$. Moreover, as $\evH(F_\I,\vec{j})$ and $\evH(F_\R,\vec{j})$ agree for $\vec{j}\in\R_{d,s,n}$, the polynomial $F_\R$ may be used to compute the unknown entries of the systematic encoding $\eval^s_d(F_\I)$ of $m$. In this section, we show that when the rate $\left|\I_{d,n}\right|/\left|\C_{s,n}\right|$ of $\Mult^s_d$ is sufficiently close to one, so that $F_\R$ has much fewer nonzero coefficients on the Newton basis than $F_\I$, the unknown entries of the systematic encoding can be computed more efficiently by using $F_\R$ in place of $F_\I$. When this gain is sufficient to compensate for the extra cost of computing $F_\R$ (for which we first compute $F_\C$), when compared to that of directly computing $F_\I$, we also gain an advantage over the encoding algorithm of Section~\ref{sec:low_rate}.

To compute the values $\evH(F_\R,\vec{j})$ for $\vec{j}\in\R_{d,s,n}$, we use Algorithm~\ref{alg:multi_hermite_eval} with $I=\C_{s,n}$ as our starting point. Then following an approach similar to that used in Section~\ref{sec:improvements}, we eliminate unnecessary operations from the algorithm by taking advantage of the fact that $F_\R\dashv(N_{\vec{i}})_{\vec{i}\in\C_{s,n}}$ has zeros in those entries with indices in $\I_{d,n}$. Once again we find that the multivariate case follows readily from the univariate case. Consequently, we begin this section by considering the univariate problem.

\subsection{Univariate algorithm}\label{sec:uni_hermite_eval_R}

Let $s$ be a positive integer and $F\in\F_q[X]$. Then the definition of the Hasse derivative implies that
\begin{equation*}
	F(X+\alpha_j)\bmod{X^s}
	=\sum^{s-1}_{t=0}H(F,t)(\alpha_j)X^t
	=\sum^{s-1}_{t=0}\evH(F,j+tq)X^t
	\quad\text{for $j\in[q]$}.
\end{equation*}
Thus, computing the values $\evH(F,j)$ for $j\in\C_{s,1}$ is equivalent to computing the polynomials $F(X+\alpha_j)\bmod{X^s}$ on the monomial basis for $j\in[q]$. Suppose now that for some nonnegative integer $d<sq-1$, the polynomial $F$ is of the form $\sum_{i\in\R_{d,s,1}}f_iN_i$ with coefficients $f_i\in\F_q$. Then $N_{d+1}$ divides~$F$, allowing us to reduce the problem of computing the values $\evH(F,j)$ for $j\in\C_{s,1}$ to the lower degree problems of computing $\evH(N_{d+1},j)$ and $\evH(F/N_{d+1},j)$ for $j\in\C_{s,1}$, with polynomial multiplications modulo~$X^s$ to combine the results. We improve upon this approach by replacing $N_{d+1}$ by its largest factor that is invariant under Taylor shifts.

Let $r=\floor{(d+1)/q}$. Then $N_{rq}$ divides $F$ since $rq\leq d+1$, and
\begin{equation}\label{eqn:N_rq}
	N_{rq}
	=\prod^{rq-1}_{j=0}\left(X-\alpha_{j\bmod{q}}\right)
	=\prod^{q-1}_{j=0}\left(X-\alpha_j\right)^r
	=\left(X^q-X\right)^r.
\end{equation}
Therefore, if we let $Q=F/N_{rq}\in\F_q[X]$, then
\begin{equation*}
	F(X+\alpha_j)\bmod{X^s}
	=X^r\left((X^{q-1}-1)^rQ(X+\alpha_j)\bmod{X^{s-r}}\right)
	\quad\text{for $j\in[q]$}.
\end{equation*}
Hence, we can compute the values $\evH(F,j)$ for $j\in[sq]\setminus[rq]\supseteq\R_{d,s,1}$ by first computing the polynomials $Q(X+\alpha_j)\bmod{X^{s-r}}$ on the monomial basis for $j\in[q]$, for which we can use Chin's Hermite evaluation algorithm, then multiplying each shifted polynomial by $(X^{q-1}-1)^r\bmod{X^{s-r}}$. Applying this approach, we obtain Algorithm~\ref{alg:uni_hermite_eval_R}. We allow the input $d$ of the algorithm to be negative, in which case $\R_{d,s,1}=\C_{s,1}$, in order to simplify the description of the multivariate algorithm in the next section.

\begin{algorithm}[!ht]
	\caption{Univariate Hermite evaluation over $\R_{d,s,1}$}\label{alg:uni_hermite_eval_R}
	\begin{algorithmic}[1]
		\Require Integers $d$ and $s$ such that $d+1<sq$ and $s>0$; the vector $F\dashv(N_i)_{i\in\R_{d,s,1}}=(f_i)_{i\in\R_{d,s,1}}$ for some $F\in\F_q[X]$ of the form $\sum_{i\in\R_{d,s,1}}f_iN_i$; and the polynomial $(X^{q-1}-1)^r\bmod{X^{s-r}}$ for $r=\max(\floor{(d+1)/q},0)$, written on the monomial basis.
		\Ensure The vector $(f_i)_{i\in\R_{d,s,1}}$ equal to $(\evH(F,j))_{j\in\R_{d,s,1}}$.
		\State Set $r=\max(\floor{(d+1)/q},0)$.
		\State\label{step:convert}Use the algorithm of Gerhard~\cite{gerhard2000} to convert $Q=F/N_{rq}$ to the monomial basis.
		\State\label{step:hermite_eval}Use the Hermite evaluation algorithm of Chin~\cite{chin1976} with the modifications described in Section~\ref{sec:univariate} to compute the coefficients of the polynomials $Q(X+\alpha_j)\bmod{X^{s-r}}$ for $j\in[q]$ on the monomial basis.
		\For{$j\in[q]$}\label{step:multiply}
			\Statew{2}{Compute $((X^{q-1}-1)^r\bmod{X^{s-r}})(Q(X+\alpha_j)\bmod{X^{s-r}})$ and set $f_{j+tq}$ equal to the coefficient of $X^{t-r}$ in the resulting polynomial for $t=\max(\ceil{(d+1-j)/q},0),\dotsc,s-1$.}
		\EndFor\label{step:multiply-end}
	\end{algorithmic}
\end{algorithm}

\begin{lemma}\label{lem:uni_hermite_eval_R} Algorithm~\ref{alg:uni_hermite_eval_R} performs $\bigO\left(\mult((s-r)q)\log((s-r)q)\right)$ operations in $\F_q$, where  $r=\max(\floor{(d+1)/q},0)$.
\end{lemma}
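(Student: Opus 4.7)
The plan is to bound the cost of each of the four blocks of Algorithm~\ref{alg:uni_hermite_eval_R} separately, show that each is $\bigO(\mult((s-r)q)\log((s-r)q))$, and then sum. First I would record the basic inequality $r\le s-1$: since $d+1<sq$, we have $r=\max(\floor{(d+1)/q},0)\le\floor{(sq-1)/q}\le s-1$, so $s-r\ge 1$ and the target bound is well-defined. Step~1 costs $\bigO(1)$ field operations.

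Next, for Step~\ref{step:convert}, note that $F\in\F_q[X]$ has support in $\R_{d,s,1}\subseteq[sq]$ and is divisible by $N_{rq}$, so $Q=F/N_{rq}$ lies in $\F_q[X]_{[(s-r)q]}$. Applying Lemma~\ref{lem:univariate_conversion} with $s$ replaced by $(s-r)q$ gives a conversion cost of $\bigO(\mult((s-r)q)\log((s-r)q))$. For Step~\ref{step:hermite_eval}, the identity
\begin{equation*}
Q(X+\alpha_j)\bmod{X^{s-r}}=\sum_{t=0}^{s-r-1}\evH(Q,j+tq)X^t
\end{equation*}
shows that collecting the coefficients of $X^0,\dotsc,X^{s-r-1}$ in $Q(X+\alpha_j)$ for all $j\in[q]$ is exactly the univariate Hermite evaluation of $Q\in\F_q[X]_{[(s-r)q]}$ on the initial segment $[(s-r)q]$. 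By Lemma~\ref{lem:univariate_hermite}, Chin's algorithm handles this in $\bigO(\mult((s-r)q)\log((s-r)q))$ operations.

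For the loop in Steps~\ref{step:multiply}--\ref{step:multiply-end}, each of the $q$ iterations multiplies two polynomials of degree less than $s-r$ and then reads off coefficients; the dominant cost is a single multiplication of cost $\mult(s-r)$. The total cost of this block is therefore $q\,\mult(s-r)$. Using the standing assumption that $\mult(k)/k$ is nondecreasing, we get $\mult(s-r)/(s-r)\le\mult((s-r)q)/((s-r)q)$, hence $q\,\mult(s-r)\le\mult((s-r)q)\le\mult((s-r)q)\log((s-r)q)$ (for $(s-r)q\ge 2$; the case $(s-r)q=1$ forces $s-r=q=1$ and everything is $\bigO(1)$). Summing the three nontrivial blocks yields the claimed bound.

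I do not expect any step to be a real obstacle: the only mildly subtle point is the reinterpretation in Step~\ref{step:hermite_eval} of the shifted polynomials $Q(X+\alpha_j)\bmod X^{s-r}$ as the output of an ordinary univariate Hermite evaluation on the initial segment $[(s-r)q]$, which follows directly from the definition of $\evH$ together with the enumeration of $[(s-r)q]$ as $\{j+tq:j\in[q],\,t\in[s-r]\}$. The remaining bookkeeping is just invoking the assumed monotonicity of $\mult(k)/k$ to absorb the $q$ independent multiplications into $\mult((s-r)q)$.
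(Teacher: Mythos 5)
Your proof is correct and follows essentially the same decomposition as the paper's: read off $Q$ with Newton-basis support in $[(s-r)q]$, invoke Lemma~\ref{lem:univariate_conversion} for Step~\ref{step:convert}, invoke Lemma~\ref{lem:univariate_hermite} for Step~\ref{step:hermite_eval}, and bound the $q$ multiplications in Steps~\ref{step:multiply}--\ref{step:multiply-end} by $q\,\mult(s-r)\le\mult((s-r)q)$ using the monotonicity of $\mult(k)/k$. The one small point the paper makes explicit that you gloss over is how $Q$'s Newton coefficients are obtained for free from the input: the identity $N_{rq+i}=N_{rq}N_i$ (a consequence of equation~\eqref{eqn:N_rq}) shows that the Newton coefficient vector of $Q$ is just a reindexed copy of $F$'s, so no field operations are incurred before Lemma~\ref{lem:univariate_conversion} applies; this is worth stating since the lemma presumes its input is already on the Newton basis.
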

\begin{proof} Equation~\eqref{eqn:N_rq} implies that $N_{rq+i}=N_{rq}N_i$ for $i,r\in\N$. Consequently, in Line~\ref{step:convert} of the algorithm, the coefficient vector of $Q$ on the Newton basis can be read directly from the coefficient vector of $F$. As $\deg Q<(s-r)q$, Lemma~\ref{lem:univariate_conversion} therefore implies that Line~\ref{step:convert} performs $\bigO\left(\mult\left((s-r)q\right)\log\left((s-r)q\right)\right)$ operations in~$\F_q$. Similarly, Lemma~\ref{lem:univariate_hermite} implies that Line~\ref{step:hermite_eval} performs $\bigO\left(\mult\left((s-r)q\right)\log\left((s-r)q\right)\right)$ operations in~$\F_q$. Finally, Lines~\ref{step:multiply}--\ref{step:multiply-end} perform $q$ multiplications of polynomials with degree less than $s-r$, requiring at most $q\mult(s-r)\leq\mult((s-r)q)$ operations in $\F_q$. Hence, Algorithm~\ref{alg:uni_hermite_eval_R} performs $\bigO\left(\mult((s-r)q)\log((s-r)q)\right)$ operations in $\F_q$.
\end{proof}

We have included the polynomial $(X^{q-1}-1)^r\bmod{X^{s-r}}$ as an input to Algorithm~\ref{alg:uni_hermite_eval_R} for the benefit of the multivariate algorithm of the next section, which is able to reuse these inputs for multiple calls to the algorithm. For an instance of the univariate problem, this input can be computed in $\bigO(\mult(\ceil{(s-r)/(q-1)})\log r)$ operations in $\F_q$ by the square and multiply algorithm for exponentiation. Alternatively, one can use the binomial theorem and Lucas' lemma~\cite[p.~230]{lucas1878} (see also~\cite{fine1947}).

\subsection{Multivariate algorithm}\label{sec:multi_hermite_eval_R}

Recall that the polynomial $F_\R$ defined in Section~\ref{sec:high_rate} has support on the monomial basis that is contained in $\C_{s,n}$, while its coefficient vector $F_\R\dashv(N_{\vec{i}})_{\vec{i}\in\C_{s,n}}$ has zeros in those entries with indices in $\I_{d,n}$ for some $d<sq$. The following lemma implies that if Algorithm~\ref{alg:multi_hermite_eval} is called on $\C_{s,n}$ and the coefficient vector of $F_\R$, then the zeros in the entries indexed by $\I_{d,n}$ persist throughout the algorithm.

\begin{lemma}\label{lem:zero_entries} If the inputs of Algorithm~\ref{alg:multi_hermite_eval} satisfy $f_{\vec{i}}=0$ for $\vec{i}\in I\cap\I_{d,n}$, for some $d\in\N$, then at the beginning of the $\ell$th iteration of the outer loop of the algorithm, for $\ell\in\{1,\dotsc,n\}$, we have $f_{\vec{i}}=0$ for $\vec{i}\in I\cap\I_{d,n}$.
\end{lemma}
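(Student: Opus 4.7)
My plan is induction on $\ell$, mirroring the structure of Lemma~\ref{lem:low_rate_zero_entries} but tracking zeros on the initial segment $\I_{d,n}$ of low-weight indices instead of on the tail. The base case $\ell = 1$ is just the hypothesis, so all the content sits in the inductive step: assuming the claim at the start of iteration $\ell < n$, I want to show the entries indexed by $I \cap \I_{d,n}$ are still zero at the start of iteration $\ell+1$.

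The $\ell$th iteration partitions the working array into subvectors $(f_{(j_1,\dotsc,j_{\ell-1},i_\ell,i_{\ell+1},\dotsc,i_n)})_{i_\ell \in \mu_\ell(I,\vec{k})}$, one per $\vec{k} = (j_1,\dotsc,j_{\ell-1},i_{\ell+1},\dotsc,i_n) \in \pi_\ell(I)$, and calls an in-place univariate Hermite evaluation on each. For a fixed $\vec{k}$, set $d' = d - (j_1 + \dotsb + j_{\ell-1} + i_{\ell+1} + \dotsb + i_n)$; the positions within this subvector that lie in $\I_{d,n}$ are exactly those with $i_\ell \leq d'$. Since $\mu_\ell(I,\vec{k})$ is itself an initial segment of $\N$, these form a prefix of the subvector (empty if $d' < 0$, in which case there is nothing to check for this $\vec{k}$).

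To finish, I would argue that this prefix of zeros survives each univariate call. By the inductive hypothesis, the input subvector -- interpreted via the in-place specification as the Newton-basis coefficient vector of some $G \in \F_q[X]$ -- has zero coefficient on $N_i$ for every $i \in \mu_\ell(I,\vec{k})$ with $i \leq d'$. The univariate algorithm overwrites the $i_\ell$th entry with $\evH(G, i_\ell)$; for $i_\ell \leq d'$, every surviving index $i$ in the Newton support of $G$ satisfies $i > d' \geq i_\ell$, so the univariate case of Lemma~\ref{lem:basis_property} gives $\evH(N_i, i_\ell) = 0$, and hence $\evH(G, i_\ell) = 0$ by linearity of $\evH({}\cdot{},i_\ell)$. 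Since the subvectors partition the array, iteration $\ell$ preserves every zero indexed by $I \cap \I_{d,n}$, closing the induction.

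I do not anticipate any real obstacle. The only thing to get right is the bookkeeping that translates ``entries indexed by a point of $\I_{d,n}$'' into ``a prefix of indices in the $i_\ell$ coordinate'' once the other coordinates are fixed; this step is exactly what allows Lemma~\ref{lem:basis_property} to deliver the vanishing, and it hinges only on the fact that $\I_{d,n}$ and the slices $\mu_\ell(I,\vec{k})$ are initial segments.
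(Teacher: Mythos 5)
Your proposal is correct and matches the paper's own proof essentially step for step: induction on $\ell$ with the trivial base case, partitioning the $\ell$th iteration into univariate subvectors indexed by $\vec{k}\in\pi_\ell(I)$, identifying the prefix $i_\ell\le d-\left|\vec{k}\right|$ as the zero entries, and invoking the vanishing from Lemma~\ref{lem:basis_property} together with linearity of $\evH({}\cdot{},j_\ell)$ to see that this prefix survives the in-place univariate call. The paper packages the Lemma~\ref{lem:basis_property} step by truncating the Newton sum at $j_\ell$ rather than by observing that the surviving Newton support lies above $d'$, but these are the same observation phrased in opposite directions.
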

\begin{proof} We prove the lemma by induction on $\ell$. The statement holds trivially for the first iteration. Therefore, suppose that for some $d\in\N$ and $\ell\in\{1,\dotsc,n-1\}$ we have $f_{\vec{i}}=0$ for $\vec{i}\in I\cap\I_{d,n}$ at the beginning of the $\ell$th iteration of the outer loop. Let $\vec{k}=(j_1,\dotsc,j_{\ell-1},i_{\ell+1},\dotsc,i_n)\in\pi_\ell(I)$. Then $f_{(j_1,\dotsc,j_{\ell-1},i_\ell,i_{\ell+1},\dotsc,i_n)}=0$ for $i_\ell\leq d-\left|\vec{k}\right|$, and Lemma~\ref{lem:basis_property} implies that
\begin{equation*}	
	\evH\left(
		\sum_{i_\ell\in\mu_\ell(I,\vec{k})}
		f_{(j_1,\dotsc,j_{\ell-1},i_\ell,i_{\ell+1},\dotsc,i_n)}N_{i_\ell}
		,j_\ell
	\right)
	=\evH\left(
		\sum^{j_\ell}_{i_\ell=0}
		f_{(j_1,\dotsc,j_{\ell-1},i_\ell,i_{\ell+1},\dotsc,i_n)}N_{i_\ell}
		,j_\ell
	\right)
\end{equation*}
for $j_\ell\in\mu_\ell(I,\vec{k})$. Thus, the entries of $(f_{(j_1,\dotsc,j_{\ell-1},i_\ell,i_{\ell+1},\dotsc,i_n)})_{i_\ell\in\mu_\ell(I,\vec{k})}$ with $i_\ell\leq d-\left|\vec{k}\right|$ are still zero after the univariate Hermite evaluation algorithm has been called on the vector in Line~\ref{step:eval}. Hence, $f_{\vec{i}}=0$ for $\vec{i}\in I\cap\I_{d,n}$ at the end of the $\ell$th iteration of the outer loop.
\end{proof}

Let $d,s\in\N$ such that $d<sq$, and $F\in\F_q[\vec{X}]_{\C_{s,n}}$ such that its coefficient vector $F\dashv(N_{\vec{i}})_{\vec{i}\in\C_{s,n}}=(f_{\vec{i}})_{\vec{i}\in\C_{s,n}}$ satisfies $f_{\vec{i}}=0$ for $\vec{i}\in\I_{d,n}$. Then it follows from Lemma~\ref{lem:zero_entries} that if Algorithm~\ref{alg:multi_hermite_eval} is called on $\C_{s,n}$ and $F\dashv(N_{\vec{i}})_{\vec{i}\in\C_{s,n}}$, then each time Line~\ref{step:eval} of the algorithm is executed, the vector 
\begin{equation*}
	\left(
		f_{(j_1,\dotsc,j_{\ell-1},i_\ell,i_{\ell+1},\dotsc,i_n)}
	\right)_{i_\ell\in\mu_\ell(\C_{s,n},\vec{k})}
	=
	\left(
		f_{(j_1,\dotsc,j_{\ell-1},i_\ell,i_{\ell+1},\dotsc,i_n)}
	\right)_{i_\ell\in\C_{s-\left|\vec{k}\bdiv{q}\right|,1}}
\end{equation*}
has zeros in those entries with $i_\ell\in\I_{d-\left|\vec{k}\right|,1}$. We can take advantage of these zero entries by modifying Line~\ref{step:eval} so that Algorithm~\ref{alg:uni_hermite_eval_R} is called on the vector $(f_{(j_1,\dotsc,j_{\ell-1},i_\ell,i_{\ell+1},\dotsc,i_n)})_{i_\ell\in\R_{d-\left|\vec{k}\right|,s-\left|\vec{k}\bdiv{q}\right|,1}}$. This modification requires that Algorithm~\ref{alg:uni_hermite_eval_R} is provided with the polynomial
\begin{equation*}
	(X^{q-1}-1)^r\bmod{X^{s-\left|\vec{k}\bdiv{q}\right|-r}}
	\quad\text{for $r=\max(\floor{(d-\left|\vec{k}\right|+1)/q},0)$}.
\end{equation*}
Therefore, along with the modification to Line~\ref{step:eval} of the algorithm, we can introduce a precomputation step to the algorithm where the polynomials
\begin{equation*}
	(X^{q-1}-1)^r\bmod{X^{s-r}}
	\quad\text{for $r=0,\dotsc,\floor{(d+1)/q}$}
\end{equation*}
are computed on the monomial basis. Then each call to Algorithm~\ref{alg:uni_hermite_eval_R} only requires that one of these polynomials be trivially reduced modulo some power of $X$. By making these modifications to Algorithm~\ref{alg:multi_hermite_eval}, we obtain Algorithm~\ref{alg:multi_hermite_eval_R}.

\begin{algorithm}[!ht]
	\caption{Multivariate Hermite evaluation over $\R_{d,s,n}$}\label{alg:multi_hermite_eval_R}
	\begin{algorithmic}[1]
		\Require Integers $d$ and $s$ such that $0\leq d<sq$; and the vector $F\dashv(N_{\vec{i}})_{\vec{i}\in\R_{d,s,n}}=(f_{\vec{i}})_{\vec{i}\in\R_{d,s,n}}$ for some $F\in\F_q[\vec{X}]$ of the form $\sum_{\vec{i}\in\R_{d,s,n}}f_{\vec{i}}N_{\vec{i}}$ with $n\geq 2$.
		
		\Ensure The vector $(f_{\vec{i}})_{\vec{i}\in\R_{d,s,n}}$ equal to $(\evH(F,\vec{j}))_{\vec{j}\in\R_{d,s,n}}$.
		
		\State Set $U_0=1$.
		
		\For{$r=1,\dotsc,\floor{(d+1)/q}$}\label{step:pre-comp}
			\State Compute $U_r=(X^{q-1}-1)U_{r-1}\bmod{X^{s-r}}$ on the monomial basis.
		\EndFor\label{step:pre-comp-end}
		
		\For{$\ell=1,\dotsc,n$}
		
			\For{$\vec{k}=(j_1,\dotsc,j_{\ell-1},i_{\ell+1},\dotsc,i_n)\in\C_{s,n-1}$}
			
				\Statew{3}{If $\R_{d-\left|\vec{k}\right|,s-\left|\vec{k}\bdiv{q}\right|,1}$ is nonempty (which fails to hold if and only if $d=sq-1$ and $\vec{k}=\vec{s}q$ for some $\vec{s}\in S_{s,n-1}$), then call Algorithm~\ref{alg:uni_hermite_eval_R} on the integers $d'=d-\left|\vec{k}\right|$ and $s'=s-\left|\vec{k}\bdiv{q}\right|$, the vector $(f_{(j_1,\dotsc,j_{\ell-1},i_\ell,i_{\ell+1},\dotsc,i_n)})_{i_\ell\in\R_{d',s',1}}$, and the polynomial $U_r\bmod{X^{s'-r}}$ for $r=\max(\floor{(d'+1)/q},0)$.}\label{step:eval_R}
			
			\EndFor
		
		\EndFor
	\end{algorithmic}
\end{algorithm}

We streamline notation during the complexity analysis of Algorithm~\ref{alg:multi_hermite_eval_R} by defining $\Delta_{d,s}=\left(s-\max(\floor{(d+1)/q},0)\right)q$ for $d,s\in\Z$, $\mult^*(0)=0$ and $\mult^*(k)=\mult(k)\log k$ for nonzero $k\in\N$. Then Lemma~\ref{lem:uni_hermite_eval_R} implies that Algorithm~\ref{alg:uni_hermite_eval_R} performs $\bigO(\mult^*(\Delta_{d-\left|\vec{k}\right|,s-\left|\vec{k}\bdiv{q}\right|}))$ operations in $\F_q$ during Line~\ref{step:eval_R} of Algorithm~\ref{alg:multi_hermite_eval_R}. The following lemma is used to bound the total number of operations performed by Algorithm~\ref{alg:uni_hermite_eval_R} over each iteration of the main loop.

\begin{lemma}\label{lem:awful_bnd} For $n\geq 2$ and $d,s\in\Z$ such that $d<sq$ and $s>0$,
\begin{equation}\label{eqn:awful_bnd}
	\sum_{\vec{k}\in\C_{s,n-1}}
	\Delta_{d-\left|\vec{k}\right|,s-\left|\vec{k}\bdiv{q}\right|}
	\leq\left(1+\frac{\max(\floor{d/q}+1,0)}{s}\right)\left|\R_{d,s,n}\right|.
\end{equation}
\end{lemma}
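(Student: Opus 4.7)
The plan is to induct on $n\geq 2$. Both sides of the claimed inequality decompose cleanly along the last coordinate of $\vec{k}$: writing $\vec{k}\in\C_{s,n-1}$ as $(\vec{k}',k')$ with $k'\in[sq]$ and $\vec{k}'\in\C_{s-\floor{k'/q},n-2}$ (using the convention $\C_{s',0}=\{()\}$ for $s'>0$), one has
\begin{equation*}
	\sum_{\vec{k}\in\C_{s,n-1}}\Delta_{d-\left|\vec{k}\right|,s-\left|\vec{k}\bdiv q\right|}
	=\sum_{k\in[sq]}\sum_{\vec{k}'\in\C_{s-\floor{k/q},n-2}}\Delta_{d-k-\left|\vec{k}'\right|,s-\floor{k/q}-\left|\vec{k}'\bdiv q\right|}
\end{equation*}
together with the analogous identity $\left|\R_{d,s,n}\right|=\sum_{k\in[sq]}\left|\R_{d-k,s-\floor{k/q},n-1}\right|$.

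For the inductive step ($n\geq 3$), I would apply the induction hypothesis to each inner sum and then use the pointwise comparison
\begin{equation*}
	\frac{\max(\floor{(d-k)/q}+1,0)}{s-\floor{k/q}}\leq\frac{\max(\floor{d/q}+1,0)}{s}\quad\text{for }k\in[sq].
\end{equation*}
Writing $k=aq+b$ with $a\in[0,s-1]$ and $b\in[0,q-1]$ and separating the cases $d-k<0$, $b\leq d\bmod q$, and $b>d\bmod q$, each subcase reduces after cross-multiplying to the inequality $\floor{d/q}\leq s-1$, which is immediate from $d<sq$.

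The base case $n=2$ is where I expect the chief obstacle, and requires direct computation. Expanding $\Delta$ gives the identity $\sum_{k\in[sq]}\Delta_{d-k,s-\floor{k/q}}=\left|\R_{d,s,2}\right|+\sum_{k=0}^{\min(d,sq-1)}((d-k+1)\bmod q)$; when $d<0$, both sides of the lemma coincide with $\left|\C_{s,2}\right|$. For $d\geq 0$, writing $d+1=Fq+G$ with $G\in[0,q-1]$, standard closed forms give $\sum_{m=1}^{d+1}(m\bmod q)=Fq(q-1)/2+G(G+1)/2$ and $\left|\R_{d,s,2}\right|=q^2 s(s+1)/2-(d+1)(d+2)/2$, reducing the lemma to a polynomial inequality in $F,G,s,q$. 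For $G=0$, the difference between the two sides factors (up to a positive multiple) as $(s-F)(q(s+F)+1)\geq 0$. For $G\geq 1$, the identity $q^2 s(s+1)-M(M+1)=(qs-M)(qs+M+1)+qs(q-1)$ with $M=Fq+G$, combined with the factorisation $q(q-1)-G(G+1)=(q-G-1)(q+G)$, exhibits the difference as $(F+1)(qs-M)(qs+M+1)+s(q-G-1)(q+G)$, a sum of manifestly nonnegative products since $F\leq s-1$ and $G\leq q-1$. Finding the right rearrangement to expose nonnegativity is the only subtlety.
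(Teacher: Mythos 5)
Your proposal is correct, and the overall architecture is the same as the paper's: induction on $n$ with the decomposition of $\C_{s,n-1}$ along one coordinate, coupled in the inductive step with the pointwise comparison $\max(\floor{(d-k)/q}+1,0)/(s-\floor{k/q})\leq\max(\floor{d/q}+1,0)/s$; the paper simply asserts this comparison while you sketch a case analysis for it, which is a welcome addition. Where you genuinely diverge is the base case $n=2$. Starting from the same identity $\sum_{k\in\C_{s,1}}\Delta_{d-k,s-(k\bdiv q)}=\left|\R_{d,s,2}\right|+\sum_{k=0}^{d}\left((d-k+1)\bmod q\right)$, you compute both sides of the residual inequality exactly in terms of $d+1=Fq+G$ and establish nonnegativity by factoring, e.g.\ for $G\geq 1$ you exhibit the difference as $(F+1)(qs-M)(qs+M+1)+s(q-G-1)(q+G)$ with $M=d+1$, a sum of nonnegative products since $M\leq sq$ and $G\leq q-1$. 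The paper instead sidesteps all explicit polynomial algebra by observing two bounds that match exactly: $\sum_{k=0}^{d}\left((d-k+1)\bmod q\right)\leq\max(\floor{d/q}+1,0)\binom{q}{2}$ (each run of $q$ consecutive summands contributes $\binom{q}{2}$) and $\left|\R_{d,s,2}\right|\geq\left|\R_{sq-1,s,2}\right|=s\binom{q}{2}$, after which the desired inequality is immediate. Your route is more computational but entirely rigorous; the paper's is shorter and avoids having to find the right factorisation, which you yourself flag as the subtle point. Both correctly yield the lemma, and it is reassuring that your exact computations show the paper's matching bounds are as tight as possible when $d=sq-1$ and $G=q-1$.
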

\begin{proof} We prove the lemma by induction on $n$. Suppose that $d,s\in\Z$ such that $d<sq$ and $s>0$. Then for all integers $\ell\geq 2$, we have
\begin{equation}\label{eqn:R_recursion}
	\R_{d,s,\ell}
	=\bigcup_{i_1\in\C_{s,1}}
		\left\{
		(i_1,i_2,\dotsc,i_\ell)
		\mid
		(i_2,\dotsc,i_\ell)\in\R_{d-i_1,s-(i_1\bdiv{q}),\ell-1}
	\right\}.
\end{equation}
It follows that
\begin{align*}\label{eqn:delta_bnd}
	\sum_{k\in\C_{s,1}}\Delta_{d-k,s-(k\bdiv{q})}
	&=\sum_{k\in\C_{s,1}}
	\left|\R_{d-k,s-(k \bdiv{q}),1}\right|
	+\left(\max\left(d-k+1,0\right)\bmod{q}\right)\\
	&=\left|\R_{d,s,2}\right|
	+\sum^d_{k=0}\left(d-k+1\bmod{q}\right).
\end{align*}
Therefore, the lemma is true for $n=2$ since
\begin{align*}
	\sum^d_{k=0}\left(d-k+1\bmod{q}\right)
	&\leq\max\left(\floor{\frac{d}{q}}+1,0\right)
	\sum^{q-1}_{k=0}\left(d-k+1\bmod{q}\right)\\
	&=\max\left(\floor{\frac{d}{q}}+1,0\right)\binom{q}{2}
\end{align*}
and
\begin{equation*}
	\left|\R_{d,s,2}\right|
	\geq\left|\R_{sq-1,s,2}\right|
	=\binom{s+1}{2}q^2-\binom{sq+1}{2}
	=s\binom{q}{2}.
\end{equation*}

Suppose now that $n\geq 3$ and that the lemma is true for all smaller values of $n$. Let $d,s\in\Z$ such that $d<sq$ and $s>0$. Then
\begin{equation}\label{eqn:awful_bnd_lhs}
	\sum_{\vec{k}\in\C_{s,n-1}}
	\Delta_{d-\left|\vec{k}\right|,s-\left|\vec{k}\bdiv{q}\right|}
	=\sum_{k\in\C_{s,1}}
	\sum_{\vec{k}\in\C_{s-(k\bdiv{q}),n-2}}
	\Delta_{d-k-\left|\vec{k}\right|,s-(k\bdiv{q})-\left|\vec{k}\bdiv{q}\right|}.
\end{equation}
For $k\in\C_{s,1}$, we have $d-k\leq d-(k\bdiv{q})q<(s-(k\bdiv{q}))q$ and $s-(k\bdiv{q})>0$. Thus, the induction hypothesis implies that for each $k\in\C_{s,1}$, the inner sum on the right-hand side of~\eqref{eqn:awful_bnd_lhs} is at most
\begin{equation*}
	\left(1+\frac{\max(\floor{(d-k)/q}+1,0)}{s-\floor{k/q}}\right)
	\left|
		\R_{d-k,s-(k\bdiv{q}),n-1}
	\right|.
\end{equation*}
Here, the first factor is always less than or equal to $1+\max(\floor{d/q}+1,0)/s$. Therefore, combining and substituting these upper bounds into~\eqref{eqn:awful_bnd_lhs} yields the inequality
\begin{equation*}
	\sum_{\vec{k}\in\C_{s,n-1}}
	\Delta_{d-\left|\vec{k}\right|,s-\left|\vec{k}\bdiv{q}\right|}
	\leq
	\left(1+\frac{\max(\floor{d/q}+1,0)}{s}\right)
	\sum_{k\in\C_{s,1}}
	\left|
		\R_{d-k,s-(k\bdiv{q}),n-1}
	\right|.
\end{equation*}
Equation~\eqref{eqn:R_recursion} with $\ell=n$ implies that the sum on the right-hand side of this inequality is equal to $\left|\R_{d,s,n}\right|$. Hence, \eqref{eqn:awful_bnd} holds and the lemma follows by induction.
\end{proof}

\begin{lemma}\label{lem:multi_hermite_eval_R} Algorithm~\ref{alg:multi_hermite_eval_R} performs
\begin{equation}\label{eqn:multi_hermite_eval_R}
	\bigO\left(
		\left(1+\frac{\floor{d/q}+1}{s}\right)
		\frac{\mult(\delta)\log\delta}{\delta}
		\left|\R_{d,s,n}\right|n
		+\ceil{\frac{s-1}{q-1}}\floor{\frac{d+1}{q}}
	\right)
\end{equation}
operations in $\F_q$, where $\delta=sq-\max\left(d+1-n(q-1),0\right)$.
\end{lemma}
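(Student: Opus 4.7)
The plan is to split the operation count into the precomputation in Lines~\ref{step:pre-comp}--\ref{step:pre-comp-end} and the main double loop, and bound each separately. For the precomputation, I would exploit the fact that each $U_r$ is a polynomial in $X^{q-1}$, inherited from the binomial expansion of $(X^{q-1}-1)^r$, and so after reduction modulo $X^{s-r}$ its monomial-basis representation has at most $\ceil{(s-1)/(q-1)}$ nonzero coefficients. Computing $U_r$ from $U_{r-1}$ then amounts to shifting $U_{r-1}$ by $q-1$ positions, subtracting $U_{r-1}$ from the shifted polynomial, and truncating modulo $X^{s-r}$, which costs $\bigO(\ceil{(s-1)/(q-1)})$ operations in $\F_q$. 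Summing over the $\floor{(d+1)/q}$ iterations of the precomputation loop recovers the second term of~\eqref{eqn:multi_hermite_eval_R}.

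For the main loop, Lemma~\ref{lem:uni_hermite_eval_R} implies that the call to Algorithm~\ref{alg:uni_hermite_eval_R} in Line~\ref{step:eval_R}, with parameters $d'=d-\left|\vec{k}\right|$ and $s'=s-\left|\vec{k}\bdiv{q}\right|$, performs $\bigO(\mult^*(\Delta_{d',s'}))$ operations in $\F_q$. Since $\pi_\ell(\C_{s,n})=\C_{s,n-1}$ for each $\ell\in\{1,\dotsc,n\}$, summing over all invocations yields a total of
\begin{equation*}
	\bigO\left(n\sum_{\vec{k}\in\C_{s,n-1}}\mult^*\left(\Delta_{d-\left|\vec{k}\right|,s-\left|\vec{k}\bdiv{q}\right|}\right)\right)
\end{equation*}
operations in $\F_q$ for the main loop.

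To convert this sum into the first term of~\eqref{eqn:multi_hermite_eval_R}, I proceed in two steps. First, I would establish the pointwise bound $\Delta_{d-\left|\vec{k}\right|,s-\left|\vec{k}\bdiv{q}\right|}\leq\delta$ for every $\vec{k}\in\C_{s,n-1}$. Writing $\left|\vec{k}\right|=\left|\vec{k}\bmod{q}\right|+\left|\vec{k}\bdiv{q}\right|q$ allows the $\left|\vec{k}\bdiv{q}\right|q$ contributions to cancel inside the definition of $\Delta$, reducing the claim to a short calculation that uses $\left|\vec{k}\bmod{q}\right|\leq(n-1)(q-1)$; the case $d<\left|\vec{k}\right|$ is handled separately by lower-bounding $\left|\vec{k}\bdiv{q}\right|q$ via the same inequality together with $\left|\vec{k}\right|\geq d+1$. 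Second, since $\mult(k)/k$ is nondecreasing by the paper's convention, so is $\mult^*(k)/k$, and the pointwise bound gives $\mult^*(\Delta)\leq(\mult^*(\delta)/\delta)\Delta$. Applying Lemma~\ref{lem:awful_bnd} to bound $\sum_{\vec{k}\in\C_{s,n-1}}\Delta_{d-\left|\vec{k}\right|,s-\left|\vec{k}\bdiv{q}\right|}$ by $(1+(\floor{d/q}+1)/s)\left|\R_{d,s,n}\right|$ then produces exactly the first term of~\eqref{eqn:multi_hermite_eval_R}. The main obstacle is the pointwise verification $\Delta\leq\delta$, which requires some bookkeeping with floor inequalities and separate treatment of the two sign cases of $d-\left|\vec{k}\right|+1$; once it is in hand, the summation step is a direct appeal to Lemma~\ref{lem:awful_bnd}.
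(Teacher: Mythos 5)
Your proposal matches the paper's own proof in structure and substance: you bound the precomputation loop by observing $U_r$ is a polynomial in $X^{q-1}$, bound each call in Line~\ref{step:eval_R} via Lemma~\ref{lem:uni_hermite_eval_R}, establish the pointwise bound $\Delta_{d-\left|\vec{k}\right|,s-\left|\vec{k}\bdiv{q}\right|}\leq\delta$ by decomposing $\left|\vec{k}\right|=\left|\vec{k}\bmod q\right|+\left|\vec{k}\bdiv q\right|q$ and using $\left|\vec{k}\bmod q\right|\leq(n-1)(q-1)$, pull out the uniform factor $\mult^*(\delta)/\delta$ using monotonicity of $\mult^*(k)/k$, and finish with Lemma~\ref{lem:awful_bnd}. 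This is essentially the paper's argument verbatim.
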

\begin{proof} Each polynomial $U_r$ is of the form $V(X^{q-1})$ for some polynomial $V\in\F_q[X]$ of degree less than $\ceil{(s-r)/(q-1)}$. Thus, Lines~\ref{step:pre-comp}--\ref{step:pre-comp-end} of the algorithm perform at most $\ceil{(s-1)/(q-1)}\floor{(d+1)/q}$ operations in $\F_q$. As the polynomials $U_r$ are written on the monomial basis, no operations in $\F_q$ are performed in order to compute their residues in Line~\ref{step:eval_R}. Consequently, Lemma~\ref{lem:uni_hermite_eval_R} implies that Line~\ref{step:eval_R} performs $\bigO(\mult^*(\Delta_{d-\left|\vec{k}\right|,s-\left|\vec{k}\bdiv{q}\right|}))$ operations in $\F_q$. Hence, Algorithm~\ref{alg:multi_hermite_eval_R} performs
\begin{equation}\label{eqn:multi_R_bnd}
	\bigO\left(
		\ceil{\frac{s-1}{q-1}}\floor{\frac{d+1}{q}}
		+n\sum_{\vec{k}\in\C_{s,n-1}}
		\mult^{*\!}\left(
			\Delta_{d-\left|\vec{k}\right|,s-\left|\vec{k}\bdiv{q}\right|}
		\right)
	\right)
\end{equation}
operations in $\F_q$. As $\mult(k)/k$ is a nondecreasing function of $k$, so too is $\mult^*(k)/k$. Moreover, for $\vec{k}\in\C_{s,n-1}$,
\begin{align*}
	\Delta_{d-\left|\vec{k}\right|,s-\left|\vec{k}\bdiv{q}\right|}
	&=\left(s-\left|\vec{k}\bdiv{q}\right|\right)q
	-\max\left(
		\floor{(d-\left|\vec{k}\right|+1)/q}q,0
	\right)\\
	&\leq\left(s-\left|\vec{k}\bdiv{q}\right|\right)q
	-\max\left(d-\left|\vec{k}\right|+1-(q-1),0\right)\\
	&\leq sq
	-\max\left(d-\left|\vec{k}\bmod{q}\right|+1-(q-1),0\right)\\
	&\leq sq-\max\left(d+1-n(q-1),0\right).
\end{align*}
Letting $\delta=sq-\max\left(d+1-n(q-1),0\right)$, it follows that
\begin{equation*}
	\sum_{\vec{k}\in\C_{s,n-1}}
	\mult^{*\!}\left(
		\Delta_{d-\left|\vec{k}\right|,s-\left|\vec{k}\bdiv{q}\right|}
	\right)
	\leq
	\frac{\mult^{*\!}(\delta)}{\delta}
	\sum_{\vec{k}\in\C_{s,n-1}}
	\Delta_{d-\left|\vec{k}\right|,s-\left|\vec{k}\bdiv{q}\right|}.
\end{equation*}
Combining this inequality with~\eqref{eqn:awful_bnd} and~\eqref{eqn:multi_R_bnd} completes the proof.
\end{proof}

The factor $1+(\floor{d/q}+1)/s$ of the first term of the complexity bound~\eqref{eqn:multi_hermite_eval_R} measures the penalty that results from the complexity of Algorithm~\ref{alg:uni_hermite_eval_R} being a function of $\Delta_{d,s}$ rather than a function of $\left|\R_{d,s,1}\right|$. The former may be larger by a factor of $q-1$, while the penalty incurred by Algorithm~\ref{alg:multi_hermite_eval_R} is limited to a factor of~$2$. We have made no attempt to optimise the size of this factor, which would require strengthening the bound of Lemma~\ref{lem:awful_bnd}. For $d,s\in\N$ such that $d<sq$, $\R_{d,s,n}$ contains the vectors $\vec{j}+\vec{t}q$ for all $(\vec{j},\vec{t})\in[q]^n\times\N^n$ such that $\left|\vec{j}\right|\geq q$ and $\left|\vec{t}\right|=s-1$. Thus, we obtain the crude lower bound
\begin{equation*}
	\left|\R_{d,s,n}\right|
	\geq\binom{n-1+s-1}{n-1}\left(q^n-\binom{n+q-1}{n}\right).
\end{equation*}
From this bound, it is readily deduce that the value $\delta$ defined in Lemma~\ref{lem:multi_hermite_eval_R} is $\bigO(\left|\R_{d,s,n}\right|)$ for $n\geq 2$. Similarly, the second term of~\eqref{eqn:multi_hermite_eval_R} is $\bigO(\left|\R_{d,s,n}\right|)$ for $n\geq 3$. Thus, if $\mult(k)$ is taken to be in $\bigO(k\log k\log\log k)$, then Algorithm~\ref{alg:multi_hermite_eval_R} performs $\tilde{\bigO}(\left|\R_{d,s,n}\right|)$ operations in $\F_q$ for $n\geq 3$. For $n=2$, the second term of~\eqref{eqn:multi_hermite_eval_R} is only guaranteed to be $\bigO(\left|\C_{s,2}\right|)$, which is all that is required for fast encoding. Recall that the second term of~\eqref{eqn:multi_hermite_eval_R} counts the cost of Lines~\ref{step:pre-comp}--\ref{step:pre-comp-end} of Algorithm~\ref{alg:multi_hermite_eval_R}, which may be performed as a precomputation in many settings. With this precomputation and fast polynomial arithmetic, Algorithm~\ref{alg:multi_hermite_eval_R} attains quasi-linear complexity for $n=2$.

\subsection{Encoding algorithm}

The systematic encoding algorithm for high-rate multiplicity codes is presented in Algorithm~\ref{alg:high_rate}. The algorithm follows the approach described in Section~\ref{sec:high_rate}: first, the extended interpolation problem is solved in order to recover the polynomial $F_\C$,  after which $F_\R$ is deduced and used to compute the non-message entries of the encoding. Taking the univariate algorithm used by Algorithm~\ref{alg:multi_hermite_interp} to be Chin's interpolation algorithm, as modified in Section~\ref{sec:univariate}, it follows from Theorem~\ref{thm:multivariate_hermite} that Line~\ref{step:F_C} of the algorithm performs
\begin{equation*}
	\bigO\left(
		\frac{\mult(sq)\log sq}{sq}\left|\C_{s,n}\right|n
	\right)
\end{equation*}
operations in $\F_q$. Line~\ref{step:F_R} then performs $\left|\R_{d,s,n}\right|$ (cheap) multiplications by $-1$. Lines~\ref{step:uni_precomp} and~\ref{step:uni_eval_R} perform $\bigO(\mult(sq)\log sq)$ operations in $\F_q$ if $n=1$ (see Section~\ref{sec:uni_hermite_eval_R}), while Lemma~\ref{lem:multi_hermite_eval_R} bounds the number of operations performed by Line~\ref{step:multi_eval_R} if $n\geq 2$. Combining the bounds provides a second proof of Theorem~\ref{thm:encoders}.

\begin{algorithm}[!ht]
	\caption{Systematic encoding for high-rate codes}\label{alg:high_rate}
	\begin{algorithmic}[1]
		\Require Nonnegative integers $d$ and $s$ such that $d<sq$; and a vector $(c_{\vec{j}})_{\vec{j}\in\C_{s,n}}$ such that $m=(c_{\vec{j}})_{\vec{j}\in\I_{d,n}}$ is a message of $\Mult^s_d$, and $c_{\vec{j}}=0$ for $\vec{j}\in\R_{d,s,n}$.
		
		\Ensure The vector $(c_{\vec{j}})_{\vec{j}\in\C_{s,n}}$ such that
		\begin{equation*}
			\left(c_{\vec{i}}\right)_{\vec{i}\in\I_{d,n}}
			=F\dashv\left(N_{\vec{i}}\right)_{\vec{i}\in\I_{d,n}}
			\quad\text{and}\quad
			\left(c_{\vec{j}}\right)_{\vec{j}\in\R_{d,s,n}}
			=\left(\evH\left(F,\vec{j}\right)\right)_{\vec{j}\in\R_{d,s,n}}
		\end{equation*}
		for the polynomial $F\in\F_q[\vec{X}]_d$ that corresponds to the message $m$.
		
		\State\label{step:F_C}Call Algorithm~\ref{alg:multi_hermite_interp} on $\C_{s,n}$ and $(c_{\vec{j}})_{\vec{j}\in\C_{s,n}}$.
		
		\State\label{step:F_R}Set $c_{\vec{j}}$ equal to $-c_{\vec{j}}$ for each $\vec{j}\in\R_{d,s,n}$.
		
		\If{$n=1$}
		
			\State\label{step:uni_precomp} Compute $U=(X^{q-1}-1)^r\bmod{X^{s-r}}$ for $r=\floor{(d+1)/q}$.
			
			\State\label{step:uni_eval_R} Call Algorithm~\ref{alg:uni_hermite_eval_R} on $d$, $s$, $(c_j)_{j\in\R_{d,s,1}}$ and $U$.
			
		\Else
		
			\State\label{step:multi_eval_R} Call Algorithm~\ref{alg:multi_hermite_eval_R} on $d$, $s$ and $(c_{\vec{j}})_{\vec{j}\in\R_{d,s,n}}$.
			
		\EndIf
	\end{algorithmic}
\end{algorithm}

\section{Conclusion}

We presented two quasi-linear time systematic encoding algorithms for multiplicity codes which provide complimentary performance in practical settings. Of the two algorithms, the one that provides the shortest encoding time for a given set of parameters will vary with the choice of univariate algorithms. Moreover, their encoding times depend on additional factors besides the number of field operations they perform. Consequently, we cannot draw solid conclusions about the relative performance of the two algorithms in a given practical setting by comparing their stated complexities (which would also require estimating hidden constants). However, as the encoding algorithms share the same underlying subroutines, implementing both algorithms on a particular architecture should not require much more effort than implementing just one of the algorithms, and would allow for direct comparisons to be made.

In Section~\ref{sec:improvements}, we described modifications to the encoding algorithm for low-rate codes which were aimed at improving its practical performance by eliminating some unnecessary operations. Similar modifications may also be made to the encoding algorithm for high-rate codes, but we omit details here. The algorithm for high-rate codes would also benefit from improving or replacing Algorithm~\ref{alg:uni_hermite_eval_R} so that Line~\ref{step:eval_R} of Algorithm~\ref{alg:multi_hermite_eval_R} can always be performed in $\tilde{\bigO}(\left|\R_{d-\left|\vec{k}\right|,s-\left|\vec{k}\bdiv{q}\right|,1}\right|)$ operations and without the need to provide the additional polynomial input.

The Hermite interpolation and evaluation algorithms of Section~\ref{sec:hermite} may be of independent interest. The dependency of the multivariate algorithms on the univariate algorithms draws our attention to the problem of optimising the choice of univariate algorithms. In this direction, it would be interesting to develop fast univariate Hermite interpolation and evaluation algorithms that work natively on the Newton basis, which would allow us to avoid the basis conversions that are necessary when using the algorithms discussed in Section~\ref{sec:univariate}. Encouragingly, such algorithms already exist for instances that do not involve derivatives~\cite[Section~5.1]{bostan2005}. It would also be interesting to further investigate the benefits offered by the techniques of van der Hoeven~\cite{hoeven2016} when using the algorithms of Section~\ref{sec:univariate}. 

\section*{Acknowledgements}

The author would like to thank Daniel Augot and Fran\c{c}oise Levy-dit-Vehel for their helpful comments on this and earlier versions of the paper, and Joris van der Hoeven for bringing reference~\cite{hoeven2016} to the author's attention.

\bibliographystyle{amsplain}

\end{document}